%% file: main_v2.tex
\documentclass[12pt]{article}
\usepackage[english]{babel}

\usepackage{fullpage}
\usepackage{cite}

\input{macros}

\usepackage{authblk}
\usepackage[dvipsnames]{xcolor}

\renewcommand{\EE}[1]{\underset{\scaleobj{.8}{#1}}{\mathds{E}\,}}

\allowdisplaybreaks 
\usepackage{graphicx,import,xcolor,transparent} 
\usepackage{centernot}
\newcommand{\rrangle}[0]{\rangle\!\rangle} 
\newcommand{\llangle}[0]{\langle\!\langle}

\begin{document}

\title{New approaches to almost i.i.d.\ information theory
}

\author[1]{Filippo Girardi\thanks{filippo.girardi@sns.it}}
\author[2]{Giacomo De Palma}
\author[1]{Ludovico Lami}

\affil[1]{Scuola Normale Superiore, Piazza dei Cavalieri 7, 56126 Pisa, Italy}
\affil[2]{Department of Mathematics, University of Bologna, Piazza di Porta San Donato 5, 40126 Bologna, Italy}

\date{}
\setcounter{Maxaffil}{0}
\renewcommand\Affilfont{\itshape\small}

\maketitle
\begin{abstract}
Independent and identically distributed (i.i.d.)\ states are ubiquitous in quantum information theory. However, in a practical setting, the i.i.d.\ assumption is too stringent, and possibly not realistic. A physically more compelling class of `almost i.i.d.'\ sources was recently proposed by~\href{https://arxiv.org/abs/2603.15792}{[Mazzola/Sutter/Renner, arXiv:2603.15792]}. In this paper, we introduce two alternative definitions of almost i.i.d.\ states, based on the normalised quantum Wasserstein distance and on the idea of looking at the average $k$-body marginal. We explore some basic properties of these notions and prove a strict hierarchical relation among them, with Mazzola et al.'s notion being the strictest, the one based on $k$-body marginals the loosest, and the one based on the quantum Wasserstein distance in between. Strict separation is established by means of explicit examples. 
\end{abstract}

\section{Introduction}\label{sec:intro}

The notion of independent and identically distributed (i.i.d.)\ sources plays a fundamental role in classical as well as quantum information theory. Conceptually, it allows us to extract the extensive behaviour of any correlation and information measure through the procedure of regularisation, analogous to taking the thermodynamic limit in statistical physics. This has allowed, historically, the definition and calculation of many important asymptotic quantifiers, such as the distillable entanglement~\cite{Bennett-distillation, Bennett-distillation-mixed}, or the classical~\cite{Holevo-S-W, H-Schumacher-Westmoreland} and quantum~\cite{Lloyd-S-D, L-Shor-D, L-S-Devetak} capacities. 

In spite of how natural it may be to consider them, \emph{exact} i.i.d.\ sources are not necessarily operationally justified. Physically, i.i.d.-ness stems from the assumption that whatever device is generating the source is memoryless, and that its internal state does not change over time. These assumptions will typically not holds, at least in a strict sense, for a real device. We instead expect a real device to retain a partial memory of its previous history, and that its internal state fluctuates in time: rather than being an exact i.i.d.\ source, it will only be an approximate one~\cite{Renner-talk-Cambridge, Mazzola_2026}. 

But how do we formalise mathematically this idea of `almost i.i.d.'\ source? A first answer to this question was provided by Renner, who introduced a notion of `almost product' (pure) states~\cite[Theorem~4.3.2]{RennerPhD} and employed it for cryptography. A generalisation of this notion to mixed states turned out to play a key role in Brand\~{a}o and Plenio's attempted proof of the generalised quantum Stein's lemma~\cite[Eq.~(66)-(67)]{Brandao2010}, and in one of the recent solutions of the same problem~\cite[Theorem~32]{GQSL} (see also~\cite[Corollary~25]{doubly-comp-classical}). A full discussion of this notion of almost i.i.d.-ness can be found in~\cite{Mazzola_2026}.

The purpose of this paper is to investigate two alternative formalisations of the concept of almost i.i.d.\ source. Mathematically, these are based on the notions of Wasserstein distance~\cite{De_Palma_2021} and of average local marginal state; we will therefore refer to the state modelling these sources as `Wasserstein almost i.i.d.'\ and `weakly almost i.i.d.'\ states. As we will show, these new notions can be thought of as relaxed versions of the one in~\cite{Mazzola_2026}. In fact, there is a strict hierarchy among the different notions: Mazzola--Sutter--Renner almost i.i.d.\ states are also Wasserstein almost i.i.d.,\ and Wasserstein almost i.i.d.\ states are also weakly almost i.i.d.\ Furthermore, we will construct explicit examples to demonstrate that all these inclusions are strict.

Besides illustrating alternative physical notions of almost i.i.d.-ness, our investigation relates the definition in~\cite{Mazzola_2026} to the concepts of Wasserstein distance and average $k$-body marginal. This connection opens the door to the use, in the study of almost i.i.d.\ states, of a variety of mathematical techniques known to apply to these latter concepts. We therefore hope that our contribution may help advance the programme laid out in~\cite{Renner-talk-Cambridge, Mazzola_2026}, whose goal is to ascertain whether physically relevant measures of quantum information are robust under almost i.i.d.\ perturbations.\bigskip

The remainder of the manuscript is organised as follows. First, we introduce our notation and discuss why the trace distance is not the optimal quantifier of almost i.i.d.-ness (Section~\ref{sec:1.1}). We then give a brief overview on the family of almost i.i.d.\ states introduced in~\cite{RennerPhD, Mazzola_2026} -- which we call Mazzola--Sutter--Renner almost i.i.d.\ states -- and we discuss a potential relaxation to a 'tailed' version (Section~\ref{sec:renner}). In Section~\ref{sec:almost} we introduce two new meaningful notions of almost i.i.d.\ states. More specifically, we initially follow the intuition of quantifying the impact of local defects by means of the quantum Wasserstein distance of order 1, which yields the novel definition of Wasserstein almost i.i.d.\ states (Section~\ref{sec:wasserstein}). Later on, we present the weakest possible definition of almost i.i.d.\ source, only relying on asymptotic properties of the average marginal of fixed size (Section~\ref{sec:weak}). We conclude by proving a strict inclusion relation among the three families of almost i.i.d.\ sources, which yields a hierarchical structure among the definitions discussed in this paper (Section~\ref{sec:inclusion}).

\subsection{Notation and motivation}\label{sec:1.1}
Let $\mathcal{H}$ be a finite dimensional Hilbert space, and let $\mathcal{D}(\mathcal{H})\subseteq \mathcal{L}(\mathcal{H})$ be the set of density matrices (or mixed states) on $\mathcal{H}$, including all linear, positive semi-definite operators \mbox{$\rho:\mathcal{H}\to \mathcal{H}$} with unit trace. When referring to vectors $\ket{\psi}$ in $\mathcal{H}$, for the sake of brevity we always implicitly assume the normalisation relation
$\braket{\psi|\psi}=1$, which implies that $\ket{\psi}$ is a pure state. Furthermore, the density matrix associated to the pure state $\ket{\psi}$ will be denoted as $\psi = \ketbra{\psi}$. Given $n\geq 2$, we say that a $n$-partite quantum state $\rho_n\in\mathcal{D}(\mathcal{H}^{\otimes n})$ is an \emph{independently and identically distributed} (i.i.d.)\ state if it has the form $\rho_n=\rho^{\otimes n}$ for some $\rho\in\mathcal{D}(\mathcal{H})$.

We denote by $\|\,\cdot\,\|_1$ the trace norm, defined by $\|A\|_1=\Tr\sqrt{A^\dagger A}$ for any linear operator $A:\mathcal{H}\to \mathcal{H}$. The trace norm induces the trace distance $d_{\rm tr}$, defined as
\bb
    d_{\rm tr}(\rho,\sigma)\coloneqq \frac 12 \|\rho-\sigma\|_1.
\ee
A fundamental operational interpretation of the trace distance is given by the Holevo--Helstrom theorem, which quantifies the one-shot success probability of a symmetric hypothesis testing problem with uniform prior in terms of the trace distance. More precisely, if a quantum system is prepared either in the state $\rho$ or in the state $\sigma$, with equal probabilities, then the optimal measurement to guess the state of the system has success probability
\bb
    p_{\rm succ}=\frac{1}{2}\big(1+d_{\rm tr}(\rho,\sigma)\big)
\ee
The trace distance is maximal, i.e.\ $d_{\rm tr}=1$, whenever the two considered states are orthogonal. 
Now, suppose $\eta\in\mathcal{D}(\mathcal{H})$ is orthogonal to $\rho\in\mathcal{D}(\mathcal{H})$. Then, for all $n\geq 1$, the states $\rho_n\coloneqq \rho^{\otimes n}$ and $\sigma_n\coloneqq\tau\otimes \rho^{\otimes (n-1)}$ are orthogonal. This means that their trace distance is maximal. The fact that an i.i.d.\ state $\rho_n$ may be maximally far away from one of its pointwise perturbations $\sigma_n$ is natural in the context of state discrimination, but, physically, it does not capture our intuition that $\rho_n$ and $\sigma_n$ are, in a sense, very similar states. 
Even more worryingly, according to the trace distance, the two states $\tau\otimes \rho^{\otimes (n-1)}$ and $\tau^{\otimes n}$ are equally distant from $\rho^{\otimes n}$, in spite of the fact that, while $\tau^{\otimes n}$ is globally different, $\tau\otimes \rho^{\otimes (n-1)}$ only differs from $\rho^{\otimes n}$ in a local sense. The purpose of this work is to try to make this intuition rigorous, and define mathematically what we mean by `globally' vs `locally' different.

We will need some more notation. Let $S_n$ be the symmetric group on the set $[n] = \{1,\ldots,n\}$, and let $\pi \in S_n$ be a permutation. We will denote by $U_\pi$ the unitary operator implementing $\pi$ on $\mathcal{H}^{\otimes n}$. We say that $\rho_n\in\mathcal{D}(\mathcal{H}^{\otimes n})$ is permutationally invariant if
\bb
    U_\pi\rho_nU_\pi^\dagger = \rho_n \qquad \forall \, \pi \in S_n.
\ee
Let $n\geq 1$ and let $I\subseteq[n]$ be a subset of $[n]$. We write $I^c\coloneq [n]\setminus I$ for the complement of $I$ with respect to $[n]$. Given any $\rho_n\in\mathcal{D}(\mathcal{H}^{\otimes n})$, we denote by $(\rho_n)_I$ the reduced density matrix
\bb
    (\rho_n)_I\coloneqq \Tr_{I^c}[\rho_n].
\ee
For any arbitrary $1\leq k\leq n$, the expectation value of a function $f$ of the uniformly chosen random subset $I\subseteq[n]$ of size $k$ will be written as
\bb
    \EE{\substack{I\subseteq [n]\\ |I|=k}}[f(I)]\coloneqq \frac{1}{\binom{n}{k}}\sum_{\substack{I\subseteq [n]\\ |I|=k}}f(I).
\ee
Finally, if $\mathcal{X}$ is a finite set, $\mathcal{P}(\mathcal X)$ will stand for the (convex) set of all probability distributions on $\mathcal{X}$. Given a distribution $P\in \mathcal{P}(\mathcal{X})$, we write the corresponding i.i.d.\ distribution on $\mathcal{X}^n$ as $P^{\times n}$; formally,
\bb
    P^{\times n}(x_1,\dots, x_n)\coloneqq P(x_1)\cdots P(x_n)
\ee
for all $x_1,\dots, x_n\in\mathcal{X}$.

\subsection{Mazzola--Sutter--Renner almost i.i.d.\ states} \label{sec:renner}

In his seminal research on the foundations of probability, Italian mathematician Bruno de Finetti pointed out a crucial link between symmetry and the i.i.d.\ nature of infinite sequences of random variables $X_1, \dots, X_N, \dots$~\cite{deFinetti_1,deFinetti_2}. Informally speaking, if the probability of the distribution of the sequence is invariant under permutations of the random variables -- the sequence is called \emph{exchangeable} in this case -- then it is possible to prove that the law of a finite subset $\{X_1,\dots, X_N\}$ is given by a convex combination of i.i.d.\ distributions. A quantitative statement for finite sequences of random variables $X_1\dots, X_N,\dots X_{N+k}$ was put forth by Diaconis and Freedman~\cite{Diaconis}. Suppose that such finite sequence is exchangeable, and that the random variable take value in a finite space $\mathcal{X}$. Then, the probability distribution $P_N$ of a subsequence of size $N$, say $\{X_1,\dots, X_N\}$, admits a representation as a convex combination of i.i.d.\ distributions $Q^{\times N}$, with $Q\in\mathcal{P}(\mathcal{X})$. More precisely, there exists a probability measure $\mu$ on $\mathcal{P}(\mathcal{X})$ such that
\bb\label{eq:deFinetti}
    \left\|\,P_N-\int_{\mathcal{P}(\mathcal{X})}Q^{\times N}d\mu( Q)\,\right\|_1\leq \frac{2N|\mathcal{X}|}{N+k}.
\ee
The quantum generalisation in the case of infinite exchangeable sequences of states appeared for the first time in~\cite{Caves2002}, and it was then generalised in~\cite{Konig2005, Christandl_2007} to sequences of finite size with an estimate analogous to~\eqref{eq:deFinetti}. A breakthrough in this line of research was the formulation of an exponentially tighter bound, introduced by Renner~\cite{RennerPhD, Renner2007}, known as \emph{exponentially de Finetti theorem}, which can be achieved when the convex combination of i.i.d.\ is replaced by a convex combination of almost i.i.d.\ states. More precisely, let $\mathcal{H}$ be a Hilbert space of dimension $d$, and let $\ket{\Psi_{N+k}}$ be a symmetric pure state on $\mathcal{H}^{\otimes N+k}$, namely $U_\pi \ket{\Psi_{N+k}}=\ket{\Psi_{N+k}}$ for all $\pi\in S_n$, Then~\cite[SI, Theorem~1]{Renner2007}
\bb\label{eq:Renato}
    \left\|\,\Tr_k\Psi_{N+k}-\int_{\mathcal{H}}\Phi_\theta\, d\mu( \theta)\,\right\|_1\leq 3k^de^{-\frac{k(r+1)}{N+k}},
\ee
where $\mu$ is a probability measure on pure states $\ket{\theta}$ of $\mathcal{H}$, and
\bb
    \ket{\theta} \in\mathcal{H}\quad \mapsto\quad \ket{\Phi_\theta}\in\mathcal{H}^{\otimes N}
\ee
is a function from pure states of $\mathcal{H}$ to pure states of $\mathcal{H}^{\otimes N}$, with $\ket{\Phi_\theta}$ being a first prototype of (symmetric) almost i.i.d.\ state:
\bb
    \ket{\Phi_\theta} = \frac{1}{n!}\sum_{\pi\in S_N}U_\pi \ket{\theta}^{\otimes N-r}\otimes \ket{\omega^{(r)}}
\ee
for some $\ket{\omega^{(r)}}\in \mathcal{H}^{\otimes r}$. The above result suggests that the above class of approximations of i.i.d.\ states might be the relevant one for physical applications, which justifies the following definition.

\begin{Def}[(Mazzola--Sutter--Renner (MSR) almost i.i.d.\ states)]\label{def:DeFinetti_almost_iid} 
Let $\rho \in \mathcal{D}(\mathcal{H}_A)$, and $n,r \in \mathbb{N}$ such that $r \leq n$. Let $\pazocal{V}^n_r(\mathcal{H}_{AE}, \ket{\psi})$ be the set of vectors
\bb
\pazocal{V}^n_r(\mathcal{H}_{AE}, \ket{\psi})\coloneqq\{U_\pi(\ket{\psi}^{\otimes n-r} \otimes \ket{\omega^{(r)}}): \pi \in S_n, \ket{\omega^{(r)}} \in \mathcal{H}^{\otimes r}_{AE} \}.
\ee
Then, $\rho_n \in \mathcal{D}(\mathcal{H}_A^{\otimes n})$ is called a $\binom{n}{r}$-\emph{almost i.i.d.\ state in} $\rho $ if there exists a purification $\ket{\psi_\rho}_{AE}$ of $\rho$ and an extension $\rho_n^{A^n E^n}$ of $\rho_n^{A^n}$ such that
\begin{enumerate}[(i)]
\item $\rho_n^{A^n E^n}$ is permutation-invariant with respect to $(A_i,E_i) \leftrightarrow (A_j,E_j)$;
\item $\supp(\rho_n^{A^n E^n}) \subseteq  \mathrm{span}\,\pazocal{V}^n_r(\mathcal{H}_{AE},\ket{\psi_\rho}_{AE})$.
\end{enumerate}
\end{Def}

 The previous definition naturally extends to sequences of states $(\rho_n)_n$ as follows.

\begin{Def} 
    Let $\rho\in\mathcal{D}(\mathcal{H})$ be a quantum state on a Hilbert space $\mathcal{H}$. We say that a sequence $(\rho_n)_n$ of states $\rho_n\in\mathcal{D}(\mathcal{H}^{\otimes n})$ is a \emph{MSR almost i.i.d.\ source along $\rho$} if, for $r_n=o(n)$, the state $\rho_n$ is $\binom{n}{r_n}$-almost i.i.d.
\end{Def}

Intuitively, we expect that physical properties of the i.i.d.\ source should be preserved if we replace it with an almost i.i.d.\ one in the above sense, provided that the number of `defects' $r$ is sublinear in $n$.
And indeed, the von Neumann entropy, the conditional entropy and the squashed entanglement have been shown to be robust in this sense~\cite{Mazzola_2026}. 
We start our discussion by suggesting  
two possible generalisations of Definition~\ref{def:DeFinetti_almost_iid}, which could turn out to be operationally meaningful:
\begin{itemize}
    \item first, we could relax the assumption that $\rho_n^{A^n}$ is permutation-invariant, as in general the noise acting on the ideal i.i.d.\ source could not have such a property;
    \item second, it might not be the case that the number of defects can be sharply constrained to be smaller than $r$: we may want to take into account some notion of 'tail' of the probability of observing a higher fraction of defects. More precisely, we introduce the following definition.
\end{itemize}

\begin{Def}[(Tailed MSR almost i.i.d.\ states)]
Let $\rho \in \mathcal{D}(\mathcal{H}_A)$, let $n\in \mathbb{N}$, and let $f:\mathbb N\to [0,+\infty]$ be a function. For $1\leq r\leq n$, let $\Pi_r$ be the orthogonal projector on the vector space spanned by
\bb
\pazocal{V}^n_r(\mathcal{H}_A, \ket{\psi}):=\{U_\pi(\ket{\psi}^{\otimes n-r} \otimes \ket{\omega^{(r)}}): \pi \in S_n, \ket{\omega^{(r)}} \in \mathcal{H}^{\otimes r} \}.
\ee
Then, $\rho_n \in \mathcal{D}(\mathcal{H}_A^{\otimes n})$ is called a $(n,f,\eta)$-\emph{tailed almost i.i.d.\ state in} $\rho $ if there exists a purification $\ket{\psi_\rho}_{AE}$ of $\rho$ and an extension $\rho_n^{A^n E^n}$ of $\rho_n^{A^n}$ such that
\bb\label{eq:tail}
    \sum_{r=1}^n f(r)\Tr\big[(\Pi_r-\Pi_{r-1})\rho_n^{A^nE^n}\big]\leq \eta,
\ee
where $\Pi_0\coloneqq0$.
\end{Def}

Let us briefly comment on a few examples.
\begin{itemize}
    \item $f(r)=+\infty\cdot \id_{r>r_0}$: for $\eta\neq +\infty$, we get standard $\binom n {r_0}$-almost i.i.d.\ states (without the permutational symmetry constraint).
    \item $f(r)=\id_{r>r_0}$: the set of $(n,f,\eta)$-almost i.i.d.\ states corresponds to the $\sqrt \eta$-ball in trace distance around the set of $\binom{n}{r_0}$-almost i.i.d.\ states. Indeed, suppose $\rho_n^{A^n}$ satisfies \eqref{eq:tail}, namely
    \bb
        \Tr\big[\Pi_{r_0}\rho_n^{A^nE^n}\big]\geq 1-\eta
    \ee
    Then, 
    \bb
        \frac{1}{2}\|\rho_{A^n}-\tilde\rho_{A^n}\|_1 &\leq \frac{1}{2}\|\rho_{A^nE^n}-\tilde\rho_{A^nE^n}\|_1 \leq \sqrt{\eta},
    \ee
    due to the gentle measurement lemma~\cite{Davies1969,VV1999} in the form of~\cite[Lemma~9.4.1]{MARK}.
     Conversely, suppose $\rho_A^n$ is $\eta$-close in trace distance to a $\binom n {r_0}$-almost i.i.d.\ state $\tilde{\rho}_{A^n}$. Then, take an extension\footnote{This follows from Uhlmann's theorem.} $\rho_n^{A^nE^n}$ of $\rho_n^{A^n}$ such that $\|\tilde\rho_n^{A^nE^n}-\rho_n^{A^nE^n}\|_1\leq \sqrt{\eta}$, where $\tilde\rho_n^{A^nE^n}$ is the extension of $\tilde\rho_n^{A^n}$ appearing in the definition of $\binom n {r_0}$-almost i.i.d.\ state. Then, writing $\rho_n^{A^nE^n}=\tilde\rho_n^{A^nE^n}+\Delta$ with $\|\Delta\|_1\leq \sqrt{\eta}$
    \bb
    \sum_{r=1}^n f(r)\Tr\big[(\Pi_r-\Pi_{r-1})\rho_n^{A^nE^n}\big]&=\Tr\big[(\id-\Pi_r)\rho_n^{A^nE^n}\big]\\
    &=\underbrace{\Tr\big[(\id-\Pi_r)\tilde\rho_n^{A^nE^n}\big]}_{=0}+\underbrace{\Tr\big[(\id-\Pi_r)\Delta\big]}_{\leq \|\Delta\|_1}\leq \sqrt{\eta}.
    \ee
    \item $f(r)=r/n$: with this function, $\eta$ intuitively represents the average fraction of defects.
    \item $f(r)=e^{\lambda r}$: choosing this function, a sequence of $(n,f,\eta)$-states $(\rho_n)_n$ with uniformly bounded $\eta<+\infty$ has the guarantee that the tail of the number of defects decays exponentially at a rate at least equal to $\lambda$.
\end{itemize}

\section{Two new families of almost i.i.d.\ states}\label{sec:almost}

\subsection{Beyond trace distance: the quantum Wasserstein distance of order~1}\label{sec:wasserstein}

The standard distinguishability measures in quantum information theory such as the trace distance, the fidelity, and the relative entropy are all unitarily invariant, and therefore assign maximal distance to any pair of orthogonal states. This symmetry becomes inadequate when one wishes to quantify robustness under local perturbations. For instance, in multipartite systems it is often desirable that the state $|0\rangle^{\otimes n}$ be regarded as much closer to $|1\rangle\otimes |0\rangle^{\otimes (n-1)}$ than to $|1\rangle^{\otimes n}$, reflecting the fact that the former differs only locally from the reference state. More generally, one seeks distances compatible with the geometry induced by the Hamming metric on product spaces and stable under local modifications. Such robustness is especially relevant in almost i.i.d.\ quantum information theory, where one studies states that deviate from tensor-product structure only on a small fraction of subsystems.

These considerations are closely related to continuity properties of entropic quantities. The von Neumann entropy is inherently insensitive to local perturbations: acting nontrivially on a single qubit can change the entropy by at most a constant independent of the total system size. However, this stability cannot be faithfully captured by unitarily invariant distinguishability measures, since a local operation may map a state to an orthogonal one, thereby producing maximal distance. This mismatch motivates the search for alternative metrics that incorporate an underlying notion of locality.

In the classical setting, the Wasserstein distances from optimal transport theory provide a natural framework for such purposes~\cite{villani2008optimal}. Given a metric space $(\mathcal X,D)$, the Wasserstein distance of order $1$ (also called $W_1$ distance, Monge--Kantorovich distance~\cite{monge1781memoire,kantorovich2006translocation} or earth mover's distance) between two probability distributions on $\mathcal X$ measures the minimal average transportation cost required to transform one distribution into the other, where the cost of moving a unit of mass from the point $x\in\mathcal X$ to the point $y\in\mathcal X$ is $D(x,y)$.

When the underlying space consists of strings over a finite alphabet, the natural choice of metric is the Hamming distance, which counts the number of positions in which two strings differ. The associated $W_1$ distance, known as Ornstein’s $\bar d$-distance~\cite{ornstein1973application}, has proved highly successful in ergodic theory and information theory, especially in contexts involving weak dependence, coding with memory, and rate distortion theory~\cite{gray2011entropy}. Its key feature is precisely its sensitivity to local perturbations: distributions supported on strings differing in only a few coordinates remain close in $W_1$ distance even if they are perfectly distinguishable in total variation distance.

These observations have motivated the development of quantum analogues of the $W_1$ distance, capable of capturing the locality structure of multipartite quantum systems. Such distance provides a natural framework for quantifying approximate tensor-product structures and for studying continuity phenomena in almost i.i.d.\ quantum information theory.

The quantum Wasserstein distance of order $1$, or $W_1$ distance, is a generalization of the Hamming distance from the set of the strings of $n$ symbols to the set of the states of an $n$-partite quantum system~\cite{De_Palma_2021,De_Palma_2024b}. At its heart there is the notion of neighbouring states: we say that the states $\rho$ and $\sigma$ of the $n$-partite quantum system $A_1\ldots A_n$ are \emph{neighboring} if there exists a subsystem $A_i$ such that $\mathrm{Tr}_{A_i}\rho = \mathrm{Tr}_{A_i}\sigma$, \emph{i.e.}, such that $\rho$ and $\sigma$ coincide if $A_i$ is discarded. One can then define the \emph{quantum $W_1$ norm} as the norm whose unit ball is the convex hull of the differences between neighboring states. The \emph{quantum $W_1$ distance} is the distance on the set of the states of $A_1\ldots A_n$ induced by the quantum $W_1$ norm. 

\begin{Def}[(Quantum $W_1$ distance~\cite{De_Palma_2021})]
For any $\rho,\,\sigma\in \mathcal{D}(\mathcal{H}_A^{\otimes n})$, we define
\bb
    \|\rho-\sigma\|_{W_1} \coloneqq \min \Bigg\{ \;\sum_{i=1}^n c_i\quad  &\text{s.t.}\quad  && c_i\geq 0, \qquad  \rho-\sigma=\sum_{i=1}^n c_i\left(\tau^{(i)}-\eta^{(i)}\right), \\
    & \text{with}&& \tau^{(i)},\eta^{(i)}\in \mathcal{D}(\mathcal{H}^{\otimes n}_A), \quad \Tr_{A_i}\tau^{(i)}=\Tr_{A_i}\eta^{(i)} &\!\!\Bigg\}.
\ee
\end{Def}

The $W_1$ distance admits a dual formulation in terms of a generalization of the Lipshitz constant to quantum observables:

\begin{Def}[(Quantum Lipshitz constant~\cite{De_Palma_2021})]
Let $X$ be an observable of the $n$-partite quantum system $A_1\ldots A_n$.
For any $i\in[n]$, we define the dependence of $X$ on $A_i$ as
\begin{equation}
    \partial_i X = 2\min\left\{\left\|X - X_{A_i^c}\right\| :\ X_{A_i^c}\textnormal{ does not act on }A_i\right\}\,.
\end{equation}
The quantum Lipshitz constant of $X$ is then
\begin{equation}
    \|X\|_L = \max_{i\in[n]}\partial_i X\,.
\end{equation}
\end{Def}

The $W_1$ distance between two states is then equal to the maximum difference between the expectation values of observables with unit Lipshitz constant:
\begin{prop}[{\cite{De_Palma_2021}}]
    Let $\rho$ and $\sigma$ be states of the $n$-partite quantum system $A_1\ldots A_n$.
    Then,
    \begin{equation}
        \left\|\rho-\sigma\right\|_{W_1} = \max_{\|X\|_L=1}\mathrm{Tr}\left[\left(\rho-\sigma\right)X\right]\,.
    \end{equation}
\end{prop}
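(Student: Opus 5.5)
This is a duality statement: the primal problem defining $\|\cdot\|_{W_1}$ is a conic linear programme, and its dual should be the maximisation over observables with bounded Lipschitz constant. I would establish the two inequalities separately, with the easy direction first.

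\emph{Weak duality ($\geq$).} Take any feasible decomposition $\rho-\sigma=\sum_i c_i(\tau^{(i)}-\eta^{(i)})$ with $\Tr_{A_i}\tau^{(i)}=\Tr_{A_i}\eta^{(i)}$, and any observable $X$ with $\|X\|_L\le 1$. For each $i$, let $X_{A_i^c}$ be an optimiser in the definition of $\partial_i X$. Since $\tau^{(i)}-\eta^{(i)}$ has vanishing partial trace on $A_i$, we have $\Tr[(\tau^{(i)}-\eta^{(i)})X_{A_i^c}]=0$. Hence
\[
\Tr\big[(\tau^{(i)}-\eta^{(i)})X\big]=\Tr\big[(\tau^{(i)}-\eta^{(i)})(X-X_{A_i^c})\big]\le \|\tau^{(i)}-\eta^{(i)}\|_1\,\|X-X_{A_i^c}\|\le 2\cdot\tfrac{1}{2}\partial_iX\le 1.
\]
Summing over $i$ with weights $c_i$ gives $\Tr[(\rho-\sigma)X]\le\sum_i c_i$. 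Minimising over decompositions and maximising over $X$ yields the inequality. The constraint $\|X\|_L=1$ versus $\le1$ is harmless: for $\rho\ne\sigma$ one can rescale, and the case $\rho=\sigma$ is trivial.

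\emph{Strong duality ($\le$).} I would view $\|\cdot\|_{W_1}$ as the gauge (Minkowski functional) of the convex set
\[
B=\operatorname{conv}\bigcup_i\big\{\tau-\eta:\ \Tr_{A_i}\tau=\Tr_{A_i}\eta\big\}
\]
inside the real space of traceless Hermitian operators. In finite dimensions $B$ is compact, symmetric and absorbing in that space. Absorbing holds because any traceless Hermitian operator can be written as a telescoping sum of neighbouring differences, e.g.\ by replacing one site at a time. So $\|\cdot\|_{W_1}$ is a genuine norm, and by the bipolar theorem it equals the dual of its dual norm: $\|\Delta\|_{W_1}=\max\{\Tr[\Delta X]: \sup_{\beta\in B}\Tr[\beta X]\le1\}$, where $X$ ranges over Hermitian operators modulo multiples of the identity. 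It then remains to identify the support function. Concretely, one needs
\[
\sup\big\{\Tr[(\tau-\eta)X]:\ \Tr_{A_i}\tau=\Tr_{A_i}\eta\big\}=\partial_iX,
\]
so that $\sup_{B}\Tr[\beta X]=\max_i\partial_iX=\|X\|_L$.

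\emph{Main obstacle.} Showing this identity, specifically the inequality "$\ge$", is the hard step. The natural route is a second duality, this time for the local problem: maximise $\Tr[(\tau-\eta)X]$ over states $\tau,\eta$ subject to the linear constraint $\Tr_{A_i}(\tau-\eta)=0$. Slater's condition holds, for instance at $\tau=\eta=\mathbb 1/d^n$, so SDP strong duality applies. Introducing a Lagrange multiplier $Y$ acting on $A_i^c$ for the equality constraint, the dual reads
\[
\min_Y\ \big[\lambda_{\max}(X-Y)-\lambda_{\min}(X-Y)\big].
\]
Adding a suitable multiple of the identity to $Y$ turns this into $2\min_Y\|X-Y\|$, which is exactly $\partial_iX$. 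Combining this with the gauge duality above concludes the proof.
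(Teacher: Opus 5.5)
Your proof is correct. The paper gives no proof of this proposition of its own; it simply quotes it from \cite{De_Palma_2021}. So the relevant comparison is with the standard argument, which has the same global architecture as yours: $\|\cdot\|_{W_1}$ is the gauge of $B=\operatorname{conv}\bigcup_i B_i$, the Lipschitz constant plays the role of the dual norm, and everything reduces to the local identity $\sup_{\beta\in B_i}\Tr[\beta X]=\partial_i X$. Your weak-duality step, your bipolar step (working on traceless Hermitian operators, with $X$ taken modulo $\mathbb{R}\id$), and your handling of $\|X\|_L=1$ versus $\|X\|_L\le 1$ are all sound.

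You differ only in the local step. There you invoke SDP strong duality, with Slater's condition at $\tau=\eta=\id/d^n$, and then shift $Y$ by a multiple of the identity to pass from $\lambda_{\max}-\lambda_{\min}$ to $2\min_Y\|X-Y\|$. This is valid.

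A lighter alternative avoids any constraint qualification. By the Jordan decomposition $Z=Z_+-Z_-$ one has $B_i=\{Z=Z^\dagger:\ \Tr_{A_i}Z=0,\ \|Z\|_1\le 2\}$. Indeed, if $\Tr_{A_i}Z=0$ then $\Tr_{A_i}Z_+=\Tr_{A_i}Z_-$ and $\Tr Z_\pm=\|Z\|_1/2\le 1$, and you pad both parts with a common state if needed. Thus $B_i$ is the trace-norm ball of radius $2$ in the subspace $V_i=\ker\Tr_{A_i}$. The elementary duality between a norm restricted to a subspace and the dual norm on the quotient by the annihilator, with $V_i^\perp=\{\id_{A_i}\otimes Y\}$, then gives $\sup_{B_i}\Tr[\beta X]=2\min_Y\|X-\id_{A_i}\otimes Y\|=\partial_iX$ directly. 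The factor $2$ and the identity shift come for free.

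The same description of $B_i$ also makes your absorbing claim fully explicit. Write
$\Delta=\sum_{k=1}^n(\Delta^{(k-1)}-\Delta^{(k)})$ with $\Delta^{(k)}\coloneqq d^{-k}\,\id_{A_1\cdots A_k}\otimes\Tr_{A_1\cdots A_k}\Delta$, so that $\Delta^{(0)}=\Delta$ and $\Delta^{(n)}=0$. Each summand has vanishing partial trace on $A_k$ and is therefore a nonnegative multiple of an element of $B_k$.

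Both routes are rigorous. Yours is more mechanical, while the quotient-norm one is shorter and needs no appeal to Slater's condition.
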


The usefulness of the quantum $W_1$ distance comes from the property that it recognizes when two states differ only in a small fraction of the subsystems, while even a difference in a single subsystem is sufficient to make the states orthogonal and therefore maximally far with respect to any distance that is unitarily invariant, such as the trace distance, the fidelity or the relative entropy.
Therefore, the quantum $W_1$ distance has a very broad range of applications.
It has been employed in the context of quantum machine learning for quantum generative adversarial networks~\cite{Kiani_2022}, in the context of quantum computing to determine limitations of variational quantum algorithms~\cite{De_Palma_2023}, in quantum cryptography for quantum differential privacy~\cite{hirche2023quantumdifferentialprivacyinformation}, and in quantum statistical mechanics to prove the equivalence between the canonical and microcanonical ensembles~\cite{De_Palma_2022, De_Palma_2025}, for the theory of learning of many-body states~\cite{De_Palma_2024, Rouz__2024, Rouz__2024b} and to prove rapid thermalisation for geometrically local Hamiltonians~\cite{Bardet_2024, Kochanowski_2025, bakshi2025dobrushinconditionquantummarkov}.

Among the several known properties of the quantum $W_1$ distance, the main ones are the following. First, the $W_1$ distance between a generic state and a product state can be upper bounded by the relative entropy.

\begin{thm}[{\cite[Theorem 2]{De_Palma_2021}}]\label{thm:Marton}
    Let $\rho$ and $\omega$ be states of the $n$-partite quantum system $A_1\ldots A_n$, with $\omega$ a product state.
    Then,
    \begin{equation}
        \left\|\rho-\omega\right\|_{W_1} \le \sqrt{\frac{n}{2}\,D(\rho\|\omega)}\,.
    \end{equation}
\end{thm}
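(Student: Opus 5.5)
The plan is to prove this quantum Marton-type transport inequality by combining a chain-rule decomposition of the relative entropy with a subadditive, telescoping bound on the $W_1$ norm. I use only single-site quantum Pinsker and the definition of $\|\cdot\|_{W_1}$.

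Write $\omega=\omega_1\otimes\cdots\otimes\omega_n$. For $k=0,\dots,n$, introduce the hybrid states
\[
\sigma^{(k)}\coloneqq \rho_{A_1\ldots A_k}\otimes\omega_{k+1}\otimes\cdots\otimes\omega_n ,
\]
so that $\sigma^{(n)}=\rho$ and $\sigma^{(0)}=\omega$. The triangle inequality gives
\[
\|\rho-\omega\|_{W_1}\le\sum_{k=1}^n\|\sigma^{(k)}-\sigma^{(k-1)}\|_{W_1}.
\]
The two states in each term agree after tracing out $A_k$, because both reduce to $\rho_{A_1\ldots A_{k-1}}\otimes\omega_{k+1}\otimes\cdots$. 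Hence they are neighbouring, and $\|\sigma^{(k)}-\sigma^{(k-1)}\|_{W_1}\le \tfrac12\|\sigma^{(k)}-\sigma^{(k-1)}\|_1$. This follows from the definition, by decomposing the difference into positive and negative parts with the same marginal on $A_k^c$.

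The difficulty is that $\tfrac12\|\rho_{A_1..A_k}-\rho_{A_1..A_{k-1}}\otimes\omega_k\|_1$ is not directly controlled by a conditional relative entropy. Pinsker gives
\[
\tfrac12\|\cdot\|_1\le\sqrt{\tfrac12 D(\rho_{A_1..A_k}\|\rho_{A_1..A_{k-1}}\otimes\omega_k)}.
\]
The quantity under the root is
\[
D_k\coloneqq I(A_k:A_1..A_{k-1})_\rho+D(\rho_{A_k}\|\omega_k).
\]
The sum of these terms is exactly the chain rule:
\[
\sum_k D_k = D(\rho\|\omega).
\]
This identity holds because $D(\rho\|\omega)=-S(\rho)-\sum_k\Tr\rho_{A_k}\log\omega_k$, together with the telescoping of $S(\rho_{1..k})-S(\rho_{1..k-1})$. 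Cauchy--Schwarz then gives
\[
\sum_k\sqrt{\tfrac12 D_k}\le\sqrt{\tfrac n2\sum_k D_k}=\sqrt{\tfrac n2 D(\rho\|\omega)}.
\]

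The main obstacle I expect is the step bounding each hybrid difference in $W_1$ by half its trace norm. Identifying the $k$-th difference with a neighbouring pair must produce coefficient exactly $\tfrac12\|\cdot\|_1$. To check this, I write $\Delta=\sigma^{(k)}-\sigma^{(k-1)}=\Delta_+-\Delta_-$, note $\Tr_{A_k}\Delta=0$, and set $c=\Tr\Delta_+=\tfrac12\|\Delta\|_1$. Then $\tau=\Delta_+/c$ and $\eta=\Delta_-/c$ are states whose partial traces over $A_k$ agree, so $\Delta=c(\tau-\eta)$ is admissible in the minimisation with cost $c$. I verify this before assembling the chain.
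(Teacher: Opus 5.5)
Your proof is correct. The paper does not reprove this theorem but imports it from~\cite{De_Palma_2021}, and your argument is essentially the proof given there. That proof inducts on $n$: it splits off the last site by the triangle inequality, bounds $\|\rho-\rho_{A_1\ldots A_{n-1}}\otimes\omega_n\|_{W_1}$ by half the trace norm (neighbouring states) and then by quantum Pinsker, and closes with the chain rule $D(\rho\|\omega)=D(\rho_{A_1\ldots A_{n-1}}\|\omega_1\otimes\cdots\otimes\omega_{n-1})+D(\rho\|\rho_{A_1\ldots A_{n-1}}\otimes\omega_n)$ together with $\sqrt{a/2}+\sqrt{(n-1)b/2}\le\sqrt{n(a+b)/2}$. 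Your hybrid telescoping is this induction unrolled. Since each $\sigma^{(k)}-\sigma^{(k-1)}$ is already a neighbouring difference on all $n$ sites, you also avoid needing the tensorisation property $\|X\otimes\omega_n\|_{W_1}\le\|X\|_{W_1}$.
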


Second, the von Neumann entropy per subsystem is uniformly continuous with respect to the $W_1$ distance per subsystem.
\begin{thm}[{(Continuity of the von Neumann entropy~\cite[Theorem 9.1]{De_Palma_2023b})}]\label{thm:SW}
For any $n\in\mathbb{N}$ and any $\rho,\,\sigma\in\mathcal{D}\left(\mathcal{H}^{\otimes n}\right)$ we have
\begin{equation}
    \frac{1}{n}\left|S(\rho) - S(\sigma)\right| \le h_2\left(w\right) + w\log\left(\left(\dim\mathcal{H}\right)^2-1\right),\qquad \text{where}\qquad w\coloneqq \frac{\left\|\rho-\sigma\right\|_{W_1}}{n}
\end{equation}
and $h_2(x) = -x\log x - \left(1-x\right)\log\left(1-x\right)$ is the binary entropy function.
\end{thm}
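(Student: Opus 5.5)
The plan is a reduction from $n$ to $n-1$ subsystems, combined with a Fannes-type bound for states that differ on a single site, and closed by convexity; this is the route of~\cite{De_Palma_2023b}. For $\|\rho-\sigma\|_{W_1}=nw$ we take the optimal decomposition $\rho-\sigma=\sum_{i=1}^n c_i(\tau^{(i)}-\eta^{(i)})$, where $\sum_i c_i=nw$ and $\Tr_{A_i}\tau^{(i)}=\Tr_{A_i}\eta^{(i)}$. The entropy is not linear, so this decomposition cannot be inserted directly. Instead we build an explicit chain of states connecting $\rho$ to $\sigma$. The cleanest device is a coupling. Purify or extend each term and view the decomposition as describing a joint state $\Gamma$ on $A^nB^n$, with $\Gamma_{A^n}=\rho$ and $\Gamma_{B^n}=\sigma$. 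In this coupling, ``site $i$ differs'' has probability $p_i$ with $\sum_i p_i\le nw$. This is the quantum analogue of Ornstein's $\bar d$ coupling.

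The intended steps are as follows.
\begin{enumerate}
\item Construct a classical flag register. From the decomposition, obtain a classical random variable $Z=(Z_1,\dots,Z_n)\in\{0,1\}^n$ with $\mathbb{E}\sum_i Z_i\le nw$, together with states $\rho_z,\sigma_z$ satisfying $\rho=\sum_z P(z)\rho_z$ and $\sigma=\sum_zP(z)\sigma_z$. The states $\rho_z$ and $\sigma_z$ are required to agree after tracing out the sites where $z_i=1$. This is the step where the specific structure of the $W_1$ norm must be used.
\item Bound the entropy difference for each value of $z$. Since $\rho_z$ and $\sigma_z$ have the same marginal on $\{i:z_i=0\}$, the chain rule gives
\[
|S(\rho_z)-S(\sigma_z)|\le |z|\log d^2 ,
\]
where $d=\dim\mathcal{H}$. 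Refining this with conditional entropies should bring the constant down to $\log(d^2-1)$ in the final bound.
\item Remove the conditioning on the flag. Pass from $S(\rho)$ to $\sum_zP(z)S(\rho_z)$, which costs at most $H(Z)$ by concavity together with the bound $S(\sum_z p_z\rho_z)\le\sum_z p_zS(\rho_z)+H(p)$. Then use subadditivity, $H(Z)\le\sum_i h_2(p_i)\le n\,h_2(w)$, where the last step is concavity of $h_2$ for $w\le1/2$. For $w>1/2$ the right-hand side already exceeds $\log d$, so the bound is trivial in that regime.
\end{enumerate}

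Collecting these gives
\[
\frac1n|S(\rho)-S(\sigma)|\le h_2(w)+w\log(d^2-1).
\]

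The main obstacle is step 1. The $W_1$ decomposition is only a signed difference, and the terms $\tau^{(i)},\eta^{(i)}$ for different $i$ need not combine into a single coupling with a consistent flag vector. I would first attempt an induction on $n$. Write $\rho-\sigma$ as a site-$n$ neighbouring part plus a remainder whose $\Tr_{A_n}$ has $W_1$ norm at most $\sum_{i<n}c_i$, using the fact that the $W_1$ norm contracts under partial trace. Then apply the chain rule
\[
S(\rho)=S(\rho_{A_1\dots A_{n-1}})+S(A_n|A_1\dots A_{n-1})_\rho
\]
and bound the conditional-entropy term with the Alicki--Fannes--Winter inequality, tuned to give the $h_2$ and $\log(d^2-1)$ shape. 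I expect the optimal constants to require the more careful argument of~\cite[Theorem 9.1]{De_Palma_2023b}. If those constants resist, I would settle for the same form with slightly worse constants, which suffices for the sublinear-defect applications.
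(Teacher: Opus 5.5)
The paper does not prove this statement itself; it is quoted from~\cite{De_Palma_2023b}. Your proposal therefore has to stand on its own, and it has a genuine gap at Step~1 that is not a matter of constants.

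\textbf{Step~1 is false in the quantum setting.} Steps~2--3 are sound once Step~1 is granted: Araki--Lieb gives $|S(\rho_z)-S(\sigma_z)|\le 2|z|\log d$, and the flag costs at most $H(Z)\le\sum_i h_2(p_i)$. But the flagged decomposition they act on does not exist in general. Take $\rho=\ketbra{0}^{\otimes n}$ and $\sigma=\ketbra{\theta}^{\otimes n}$ with $\ket{\theta}=\cos\theta\ket{0}+\sin\theta\ket{1}$.
\begin{itemize}
\item By additivity of $W_1$ on product states, $\|\rho-\sigma\|_{W_1}=n\sin\theta$, so $w=\sin\theta$ can be made as small as you like.
\item Since $\rho$ and $\sigma$ are pure, hence extreme, any decomposition $\rho=\sum_zP(z)\rho_z$, $\sigma=\sum_zP(z)\sigma_z$ has $\rho_z=\rho$ and $\sigma_z=\sigma$ on the support of $P$.
\item These agree after discarding $\{i:z_i=1\}$ only if $z=(1,\dots,1)$, because every surviving site carries $\ketbra{0}\neq\ketbra{\theta}$.
\end{itemize}
Hence $\sum_zP(z)|z|=n$ for every admissible flag, against $nw=n\sin\theta$.

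The underlying reason is that the quantum $W_1$ norm is defined by signed decompositions, equivalently by Lipschitz duality, and not by couplings. Already for $n=1$ it is the trace distance. For non-commuting states, for example distinct pure states, the trace distance is not realised by any ``common part plus flag'' splitting. So there is no quantum Ornstein coupling to feed into Steps~2--3.

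\textbf{Your fallback does not close the gap either.}
\begin{itemize}
\item Alicki--Fannes--Winter controls $S(A_k|A_{<k})$ only through $\frac12\|\rho_{A_{\le k}}-\sigma_{A_{\le k}}\|_1$. For $\rho_1^{\otimes n}$ versus $\sigma_1^{\otimes n}$ this tends to $1$ as $k$ grows, even though $w=\frac12\|\rho_1-\sigma_1\|_1$ is small. The chain then yields only a bound of order $n\log d$, which is trivial.
\item The conditional term also cannot be charged to $c_n$. Take $\rho=\Phi^+_{A_1A_2}$ and $\sigma=\ketbra{0}\otimes\id/d$. Every optimal decomposition has $c_1\ge 1-1/d$, hence $c_2\le 1/d$. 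Yet $S(A_2|A_1)$ moves from $-\log d$ to $\log d$.
\end{itemize}

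\textbf{A route that works.} Drop couplings altogether and argue by duality against a smoothed reference state. For any full-rank $\omega$, Klein's inequality gives
\[
S(\rho)-S(\sigma)\le D(\sigma\|\omega)+\Tr\big[(\sigma-\rho)\log\omega\big]\le D(\sigma\|\omega)+\|\rho-\sigma\|_{W_1}\,\|\log\omega\|_L .
\]
Choose the full-rank state $\omega=\mathcal{N}^{\otimes n}(\sigma)$ with $\mathcal{N}=(1-w)\,\mathrm{id}+\frac{w}{d^2-1}\sum_{a\neq 0}V_a(\cdot)V_a^\dagger$, where $V_0=\id,V_1,\dots,V_{d^2-1}$ are the Weyl operators.
\begin{itemize}
\item Since $\omega\ge(1-w)^n\sigma$, we get $D(\sigma\|\omega)\le -n\log(1-w)$.
\item By the twirling identity, $\mathcal{N}_i=\big(1-\tfrac{wd^2}{d^2-1}\big)\mathrm{id}+\tfrac{wd^2}{d^2-1}\mathcal{R}_i$, where $\mathcal{R}_i(Y)=\frac{\id_i}{d}\otimes\Tr_iY$.
\item Combining this with $Y\le d^2\mathcal{R}_i(Y)$, for $w\le 1-1/d^2$ and $Y=\big(\bigotimes_{j\neq i}\mathcal{N}_j\big)(\sigma)$ one gets
\[
\tfrac{wd^2}{d^2-1}\,\mathcal{R}_i(Y)\le\omega\le d^2(1-w)\,\mathcal{R}_i(Y).
\]
\item Operator monotonicity of $\log$ then gives $\partial_i\log\omega\le\log\frac{(1-w)(d^2-1)}{w}$.
\end{itemize}
Adding the two contributions gives $-n\log(1-w)+nw\log\frac{(1-w)(d^2-1)}{w}=n\big[h_2(w)+w\log(d^2-1)\big]$, which is exactly the claim. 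Swapping $\rho$ and $\sigma$ gives the other direction. For $w>1-1/d^2$ the right-hand side already exceeds $\log d$, so the bound is trivial there. This also shows where the constant $d^2-1$ comes from: it counts the non-identity Weyl operators, which your proposed chain-rule refinement does not supply.
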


Motivated by the above discussion, we can now introduce the following definition.

\begin{Def}[(Wasserstein almost i.i.d.\ states)] Let $\rho\in\mathcal{D}(\mathcal{H})$ be a quantum state on a (finite-dimensional) Hilbert space $\mathcal{H}$ let $n\geq 1$ and $\epsilon\geq 0$. We say that a state $\rho_n\in\mathcal{D}(\mathcal{H}^{\otimes n})$ is a $(n,\epsilon)$-\emph{almost i.i.d.\ state in} $\rho$ (according to the Wasserstein distance of order 1) if  
    \bb
        \frac 1 n \|\rho_n-\rho^{\otimes n}\|_{W_1}\leq \epsilon.
    \ee
\end{Def}

 We can extend the previous definition to sequences of states $(\rho_n)_n$ as follows.

\begin{Def}
    Let $\rho\in\mathcal{D}(\mathcal{H})$ be a quantum state on a  Hilbert space $\mathcal{H}$. We say that a sequence $(\rho_n)_n$ of states $\rho_n\in\mathcal{D}(\mathcal{H}^{\otimes n})$ is a \emph{Wasserstein almost i.i.d.\ source along $\rho$} if  
    \bb
        \lim_{n\to\infty}\frac 1 n \|\rho_n-\rho^{\otimes n}\|_{W_1}=0.
    \ee
\end{Def}

\subsection{Asymptotical i.i.d.-ness and weakly almost i.i.d.\ sources}\label{sec:weak}

Consider a source that produces an $n$-partite state $\rho_n$, which we want to employ for some task. How can we guarantee that the state is close to an i.i.d.\ one? We could carry out a full tomography procedure, but, by doing so, we would effectively destroy it. A better option is to sacrifice a few subsystems, chosen randomly, and implement a tomography protocol on those only -- if the source was actually i.i.d., then any $k$-body marginal is also i.i.d. However, this procedure can only guarantee that for an arbitrarily large \emph{but fixed} $k\in \N^+$, in the limit where $n\to\infty$ the $k$-body marginals of $\rho_n$, i.e.\ the reduced states of $\rho_n$ on the sub-systems of size $k$, will be close to $\rho^{\otimes k}$, at least \emph{on average}. This motivates us to give the following definition.

\begin{Def}[(Weakly almost i.i.d.\ source)]
    Let $\rho\in\mathcal{D}(\mathcal{H})$ be a quantum state on a (finite-dimensional) Hilbert space $\mathcal{H}$. We say that a sequence $(\rho_n)_n$ of states $\rho_n\in\mathcal{D}(\mathcal{H}^{\otimes n})$ is a \emph{weakly almost i.i.d.\ source along $\rho$} if  
    \bb\label{eq:def_weak}
        \limsup_{n\to \infty} \EE{\substack{I\subseteq [n]\\ |I|=k}} \big\|(\rho_n)_I - \rho^{\otimes I}\big\|_1 = 0\qquad \forall k \in \mathbb{N}_+ ,
    \ee
    where the random variable $I$ is uniformly distributed over the subsets of $[n]$ of size $k$.
\end{Def}

\begin{Def} Let $X_n\in \mathcal{L}(\mathcal{H}^{\otimes n})$. We define the \emph{local variation} norm $\|\,\cdot\,\|_{LV}$ as
\bb
    \|X_n\|_{LV}\coloneqq \frac 12\sum_{k=1}^n 2^{-k}\EE{\substack{I\subseteq [n],\\ |I|=k}}\|\Tr_{I^c}[X_n]\|_1,
\ee
where the random variable $I$ is uniformly distributed over the subsets of $[n]$ of size $k$.
\end{Def}

The local variation norm induces a distance $\|\rho_n-\sigma_n\|_{LV}$ which metrises the notion of weakly almost i.i.d.\ state. More precisely, the following holds.

\begin{prop}\label{prop:metrisation}
    The sequence $(\rho_n)_n$ of states $\rho_n\in\mathcal{D}(\mathcal{H}^{\otimes n})$ is a weakly almost i.i.d.\ source along $\rho$ if and only if the local variation distance with respect to the corresponding i.i.d.\ state asymptotically vanishes, namely
    \bb
        \lim_{n\to\infty}\|\rho_n-\rho^{\otimes n}\|_{LV}=0.
    \ee
\end{prop}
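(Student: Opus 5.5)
Write $a_{n,k}\coloneqq \EE{\substack{I\subseteq [n]\\ |I|=k}}\big\|(\rho_n)_I-\rho^{\otimes I}\big\|_1$ for $1\le k\le n$. Since $\Tr_{I^c}[\rho^{\otimes n}]=\rho^{\otimes I}$, the definition gives directly
\[
\|\rho_n-\rho^{\otimes n}\|_{LV}=\frac12\sum_{k=1}^n 2^{-k}a_{n,k}.
\]
Both directions of the equivalence reduce to properties of the numbers $a_{n,k}$. The key facts are that $0\le a_{n,k}\le 2$, since the trace distance between states is at most $2$ in trace norm, and that the weights $2^{-k}$ are summable. The argument is a dominated-convergence-type exchange of limit and sum, with uniform boundedness of the terms doing the work.

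For the direction ``LV distance $\to 0$ $\Rightarrow$ weakly almost i.i.d.'', fix $k\in\mathbb N_+$. For every $n\ge k$ every term in the sum is nonnegative, so dropping all but the $k$-th gives
\[
a_{n,k}\le 2^{k+1}\|\rho_n-\rho^{\otimes n}\|_{LV}.
\]
Taking $\limsup_{n\to\infty}$ and using that $k$ is fixed yields $\limsup_n a_{n,k}=0$, which is exactly~\eqref{eq:def_weak}.

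For the converse, fix $\varepsilon>0$ and choose $K$ with $2^{-K}\le\varepsilon$. For $n\ge K$, split the sum at $K$. The tail satisfies
\[
\frac12\sum_{k=K+1}^n 2^{-k}a_{n,k}\le \sum_{k>K}2^{-k}=2^{-K}\le\varepsilon,
\]
by the uniform bound $a_{n,k}\le 2$. The head $\frac12\sum_{k=1}^K 2^{-k}a_{n,k}$ is a finite sum of $K$ sequences in $n$. By hypothesis each of these sequences has $\limsup$ equal to $0$, hence tends to $0$ because they are nonnegative. So the head has $\limsup$ $0$, and therefore $\limsup_n\|\rho_n-\rho^{\otimes n}\|_{LV}\le\varepsilon$. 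Letting $\varepsilon\to0$ concludes.

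The only genuine point is in this converse: the hypothesis gives convergence for each fixed $k$ but not uniformly in $k$, while the sum ranges over $k$ up to $n\to\infty$. The uniform bound $a_{n,k}\le 2$ together with the geometric weights handles exactly this, so I expect no real obstacle beyond stating the truncation cleanly. One should also note that the terms with $k>n$ are simply absent, which is consistent with the bounds above.
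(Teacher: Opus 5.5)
Your proof is correct and follows the same route as the paper: the forward direction keeps only the $k$-th term of the local variation sum, and the converse truncates at a fixed level and controls the geometric tail via a uniform bound on the marginal trace distances. Your tail estimate $2^{-K}$, obtained from $a_{n,k}\le 2$, is the careful version of the paper's display, which implicitly bounds each trace norm by $1$ rather than $2$ (harmless for the conclusion). Your remark that nonnegativity turns $\limsup_n a_{n,k}=0$ into a genuine limit also makes explicit a step the paper leaves tacit.
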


\begin{proof}
    Suppose that $\displaystyle{\lim_{n\to\infty}\|\rho_n-\rho^{\otimes n}\|_{LV}=0}$. Then, for any arbitrary $k \in \mathbb{N}_+$, we have
    \bb
        \limsup_{n\to \infty} \EE{\substack{I\subseteq [n],\\ |I|=k}}\big\|(\rho_n)_I - \rho^{\otimes I}\big\|_1 \leq \limsup_{n\to \infty}\big(2^{k+1} \|\rho_n-\rho^{\otimes n}\|_{LV}\big)=  0.
    \ee
    Conversely, suppose~\eqref{eq:def_weak} holds. Fix any arbitrary $\bar k \geq 1$; then, for $n\geq \bar k$,
    \bb
        \|\rho_n-\rho^{\otimes n}\|_{LV}\leq  \frac 12\sum_{k=1}^{\bar k} 2^{-k}\EE{\substack{I\subseteq [n],\\ |I|=k}} \big\|(\rho_n)_I-\rho^{\otimes k}\big\|_1+\frac 12\sum_{k=\bar k+1}^{n}2^{-k},
    \ee
    whence
    \bb
        \lim_{n\to \infty}\|\rho_n-\rho^{\otimes n}\|_{LV}\leq \frac 12\sum_{k=1}^{\bar k} 2^{-k} \lim_{n\to \infty}\EE{\substack{I\subseteq [n]\\ |I|=k}} \big\|(\rho_n)_I-\rho^{\otimes k}\big\|_1+2^{-\left(\bar k+1\right)}=\frac 1{2^{\bar k+1}}.
    \ee
    By arbitrariness of $\bar k$, we have concluded the proof.
\end{proof}

\section{The hierarchical structure of almost i.i.d.\ sources}\label{sec:inclusion}
\begin{figure}[t]
  \centering
  \def\svgwidth{0.75\linewidth}
  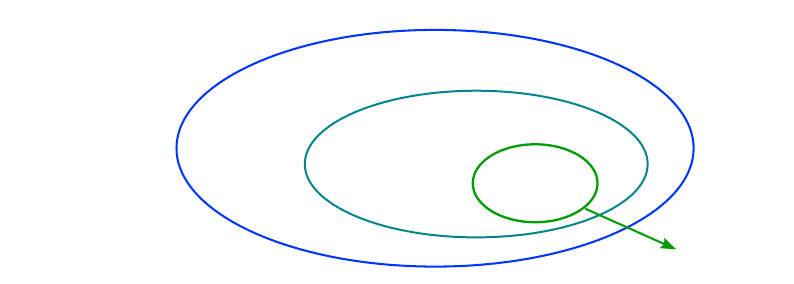
  \caption{A pictorial representation of the hierarchical relation between the various notions of almost i.i.d.-ness discussed in this paper: the Mazzola--Sutter--Renner notion is the strictest, the one derived from Wasserstein distance is the intermediate one, and, finally, weak almost i.i.d.-ness is the broadest. All inclusions are strict. 
  } 
  \label{fig:almost}
\end{figure}

A natural question that arises after the introduction of the new families of weakly and Wasserstein almost i.i.d.\ states and sources is how they mutually relate and if (and how) they are connected with the framework of Mazzola--Sutter--Renner almost i.i.d.\ states.

\begin{ex}
    For every $n\geq 1$, let
    \bb
        \rho_n\coloneqq\EE{\pi\in S_n}\big[U_\pi\big(\rho^{\otimes n-k_n}\otimes\omega_{k_n}\big)U^\dagger_\pi\big]
    \ee
    for a given fixed state $\rho\in\mathcal{D}(\mathcal{H})$ and for some defects $\omega_{k_n}\in\mathcal{D}\big(\mathcal{H}^{\otimes k_n}\big)$. This is a $\binom{n}{r}$-almost i.i.d.\ state (see \cite[Example~2.2]{Mazzola_2026}). Now, suppose $k_n=o(n)$ as $n\to\infty$, Then, as one can immediately verify through the following calculation, $(\rho_n)_n$ is a Wasserstein almost i.i.d.\ source:
    \bb
        \frac 1n \|\rho_n-\rho^{\otimes n}\|_{W_1} &\leq \frac 1n\EE{\pi\in S_n} \big\|U_\pi\big(\rho^{\otimes n-k_n}\otimes\omega_{k_n}\big)U^\dagger_\pi-\rho^{\otimes n}\big\|_{W_1} \\
        &=\frac 1n\big\|\rho^{\otimes n-k_n}\otimes\omega_{k_n}-\rho^{\otimes n}\big\|_{W_1}\\
        &=\frac 1n\big\|\omega_{k_n}-\rho^{\otimes k_n}\big\|_{W_1} \leq \frac{k_n}{n} \tendsn{} 0 .
    \ee
    Is it true that all Mazzola--Sutter--Renner almost i.i.d.\ sources are Wasserstein almost i.i.d.\ sources?
\end{ex}

The aim of this section is to prove a strict inclusion structure, as depicted in Figure~\ref{fig:almost}, yielding a sharp hierarchical structure among the three families of almost i.i.d.\ sources.

\subsection{Wasserstein almost i.i.d.\ sources are weakly almost i.i.d.\ sources}

We start by connecting the two notions of local variation distance and Wasserstein distance through the following lemma.

\begin{lemma}[(Bounding the local variation distance with Wasserstein)]\label{lem:WandLV} Let $\rho_n,\sigma_n\in\mathcal{D}(\mathcal{H}^{\otimes n})$. Then,
    \bb
       \|\rho_n - \sigma_n\|_{LV}\leq \frac 1n \|\rho_n-\sigma_n\|_{W_1}.
    \ee
\end{lemma}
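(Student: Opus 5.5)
Since $\|\cdot\|_{LV}$ is a seminorm (a positive combination of trace norms of partial traces) and the right-hand side is a norm, I will argue by convexity. By the definition of the quantum $W_1$ distance, write
\[
\rho_n-\sigma_n=\sum_{i=1}^n c_i\big(\tau^{(i)}-\eta^{(i)}\big),\qquad \sum_i c_i=\|\rho_n-\sigma_n\|_{W_1},
\]
with $\Tr_{A_i}\tau^{(i)}=\Tr_{A_i}\eta^{(i)}$. The triangle inequality for $\|\cdot\|_{LV}$ then gives $\|\rho_n-\sigma_n\|_{LV}\le\sum_i c_i\,\|\tau^{(i)}-\eta^{(i)}\|_{LV}$. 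It therefore suffices to prove that for any pair of neighbouring states $\tau,\eta$ differing only on site $i$ we have $\|\tau-\eta\|_{LV}\le 1/n$.

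For such a pair, fix $k$ and a subset $I$ with $|I|=k$. If $i\notin I$, then $\Tr_{I^c}\tau=\Tr_{I^c}\eta$, because the partial trace over $I^c\ni i$ factors through $\Tr_{A_i}$; hence this term vanishes. If $i\in I$, the trivial bound $\|\Tr_{I^c}(\tau-\eta)\|_1\le\|\tau-\eta\|_1\le 2$ holds. The probability that a uniformly random $k$-subset contains $i$ is $k/n$, so
\[
\EE{\substack{I\subseteq[n]\\|I|=k}}\|\Tr_{I^c}(\tau-\eta)\|_1\le\frac{2k}{n}.
\]
Plugging this into the definition, we obtain
\[
\|\tau-\eta\|_{LV}\le\frac12\sum_{k=1}^n 2^{-k}\frac{2k}{n}=\frac1n\sum_{k=1}^n k2^{-k}.
\]

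The main obstacle is the constant: $\sum_{k\ge1}k2^{-k}=2$, so the crude estimate only yields $2/n$, a factor of $2$ too large. To recover the stated $1/n$, I would try to sharpen the per-$k$ bound. One option is to use that $\|\Tr_{I^c}(\tau-\eta)\|_1$ is at most the $k$-independent quantity $\|\tau_{A_i}-\eta_{A_i}\|_1$-type contributions plus correlations. That seems insufficient in general, though, since neighbouring states can be orthogonal already on $I=\{i\}$.

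Alternatively, I would check whether the intended normalisation makes $\sum_k 2^{-k}k/n$ appear without the leading $\tfrac12$ doubling. For instance, one could bound the trace distance $\tfrac12\|\cdot\|_1\le1$ and note that the $W_1$ decomposition convention (differences of states with $\sum c_i$) might hide a factor $\tfrac12$, as in the standard normalisation where $\|\rho-\sigma\|_{W_1}\ge\tfrac12\|\rho-\sigma\|_1$. If neither route closes the gap, I would state the inequality with constant $2$, which suffices for the qualitative consequence that Wasserstein almost i.i.d.\ sources are weakly almost i.i.d.\ via Proposition~\ref{prop:metrisation}. Reconciling this factor is the step I expect to need the most care.
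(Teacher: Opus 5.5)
Your argument is the paper's argument. You take an optimal decomposition $\rho_n-\sigma_n=\sum_i c_i(\tau^{(i)}-\eta^{(i)})$, note that only the terms with $i\in I$ survive $\Tr_{I^c}$, bound each by $2c_i$ in trace norm, use $\Pr[i\in I]=k/n$, and sum against $2^{-k}$. Passing through the triangle inequality for $\|\cdot\|_{LV}$ and a per-neighbouring-pair estimate is the same computation. The paper's own proof also ends with $\|\rho_n-\sigma_n\|_{LV}\le\frac{2}{n}\|\rho_n-\sigma_n\|_{W_1}$, since it bounds $\sum_k k2^{-k}\le 2$. So the $\frac1n$ in the statement is a misprint, not something your argument is missing.

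In fact the $\frac1n$ version is false for every $n\ge 3$, so neither of your proposed repairs can work. Take qubits, $\rho_n=|0\rangle\langle 0|^{\otimes n}$ and $\sigma_n=|1\rangle\langle 1|\otimes|0\rangle\langle 0|^{\otimes(n-1)}$. They are neighbouring (on site $1$) and orthogonal, so $\|\rho_n-\sigma_n\|_{W_1}=1$. The upper bound comes from $c_1=1$. The lower bound is $\|X\|_{W_1}\ge\frac12\|X\|_1$, which is immediate from the definition. Meanwhile $\|\Tr_{I^c}(\rho_n-\sigma_n)\|_1$ equals $2$ if $1\in I$ and $0$ otherwise, hence
\[
\|\rho_n-\sigma_n\|_{LV}=\frac1n\sum_{k=1}^n k\,2^{-k}=\frac{2-(n+2)2^{-n}}{n}.
\]
At $n=3$ this is $\frac{11}{24}>\frac13$. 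So your intermediate bound $\|\rho_n-\sigma_n\|_{LV}\le\frac1n\big(\sum_{k=1}^n k2^{-k}\big)\|\rho_n-\sigma_n\|_{W_1}$ is already tight.

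Discard the idea that the $W_1$ normalisation might hide a factor $\frac12$. The definition here is that of De Palma et al., under which neighbouring states satisfy $\|\tau-\eta\|_{W_1}=\frac12\|\tau-\eta\|_1$, and the example above is computed with exactly that definition.

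Your fallback is the correct form of the lemma: state it with $\frac2n$, or with the exact constant $\frac{2-(n+2)2^{-n}}{n}$. That is all Corollary~\ref{cor:W_implies_almost} requires via Proposition~\ref{prop:metrisation}.
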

\begin{proof}
    First, let us prove that
    \bb\label{eq:claim_w}
       \EE{\substack{I\subseteq [n],\\ |I|=k}}\frac{1}{2} \big\|(\rho_n)_I - \rho^{\otimes I}\big\|_1\leq \frac{k}{n}\|\rho_n-\rho^{\otimes n}\|_{W_1}.
    \ee
    Call $\|\rho_n-\sigma_n\|_{W_1}=w$. Then there exists a family of non-negative coefficients $(c_i)_{i=1,\dots,n}$ such that $\sum_{i=1}^nc_i=w$
    \bb
        \rho_n-\sigma_n=\sum_{i=1}^n c_i\left(\tau^{(i)}-\eta^{(i)}\right)
    \ee
    for some states $\tau^{(i)}$ and $\eta^{(i)}$ satisfying $\Tr_i\tau^{(i)}=\Tr_i\eta^{(i)}$. In particular,
    \bb
        (\rho_n)_I-(\sigma_n)_I=\sum_{i\in I} c_i\left(\tau^{(i)}-\eta^{(i)}\right),
    \ee
    whence
    \bb
        \|(\rho_n)_I-(\sigma_n)_I\|_1\leq \sum_{i\in I} c_i \big\|\tau^{(i)}-\eta^{(i)}\big\|_1 \leq 2\sum_{i\in I} c_i.
    \ee
    As a consequence, we can easily bound
    \bb
        \EE{\substack{I\subseteq [n]\\ |I|=k}}\frac{1}{2} \big\|(\rho_n)_I - \rho^{\otimes I}\big\|_1 &= \frac{1}{\binom{n}{k}}\sum_{\substack{I\subseteq [n],\\ |I|=k}} \frac 12\big\|(\rho_n)_I - \rho^{\otimes I}\big\|_1 \\
        &\leq \frac{1}{\binom{n}{k}}\sum_{\substack{I\subseteq [n],\\ |I|=k}} \sum_{i\in I} c_i
         = \frac{1}{\binom{n}{k}}\sum_{i\in [n]}\sum_{\substack{I\ni i,\\ |I|=k}}  c_i 
        = \frac{\binom{n-1}{k-1}}{\binom{n}{k}}\sum_{i\in [n]}c_i = \frac{k}{n}w.
    \ee
    This proves~\eqref{eq:claim_w} and immediately implies the bound
    \bb
       \|\rho_n - \sigma_n\|_{LV}\leq \left(\sum_{k=1}^n2^{-k}\frac{k}{n}\right)\|\rho_n-\sigma_n\|_{W_1}\leq \frac2n \|\rho_n-\sigma_n\|_{W_1},
    \ee
    which completes the proof.
\end{proof}
By combining Proposition~\ref{prop:metrisation} with Lemma~\ref{lem:WandLV}, we immediately get the following result.
\begin{cor}\label{cor:W_implies_almost}
    All Wasserstein almost i.i.d.\ sources are a weakly almost i.i.d.\ sources.
\end{cor}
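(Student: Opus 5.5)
The corollary follows by chaining the two earlier results, so the plan is short. Let $(\rho_n)_n$ be a Wasserstein almost i.i.d.\ source along $\rho$, which by definition means
\[
\lim_{n\to\infty}\frac1n\|\rho_n-\rho^{\otimes n}\|_{W_1}=0 .
\]
I will apply Lemma~\ref{lem:WandLV} with $\sigma_n=\rho^{\otimes n}$. This bounds $\|\rho_n-\rho^{\otimes n}\|_{LV}$ by a constant multiple of $\frac1n\|\rho_n-\rho^{\otimes n}\|_{W_1}$. The constant is $1$ in the statement of the lemma, and at worst $2$ following the final line of its proof; either way it is independent of $n$.

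Next I will let $n\to\infty$. Since the local variation distance is non-negative, it is squeezed to zero, giving
\[
\lim_{n\to\infty}\|\rho_n-\rho^{\otimes n}\|_{LV}=0 .
\]
The ``if'' direction of Proposition~\ref{prop:metrisation} then states exactly that $(\rho_n)_n$ is a weakly almost i.i.d.\ source along $\rho$.

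Alternatively, one can bypass the LV norm. The intermediate inequality~\eqref{eq:claim_w} in the proof of the lemma gives, for each fixed $k$,
\[
\EE{\substack{I\subseteq [n]\\ |I|=k}}\|(\rho_n)_I-\rho^{\otimes I}\|_1\le \frac{2k}{n}\|\rho_n-\rho^{\otimes n}\|_{W_1},
\]
and the right-hand side vanishes as $n\to\infty$ with $k$ fixed. This is precisely the condition~\eqref{eq:def_weak}.

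There is no real obstacle here. The only point requiring care is that the constant relating the LV and normalised $W_1$ distances does not depend on $n$, and this is guaranteed by $\sum_k k2^{-k}\le 2$.
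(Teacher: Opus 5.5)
Your proposal is correct and follows the paper's route exactly: the corollary is obtained by applying Lemma~\ref{lem:WandLV} with $\sigma_n=\rho^{\otimes n}$ and then the ``if'' direction of Proposition~\ref{prop:metrisation}, while your alternative via~\eqref{eq:claim_w} is the core of that lemma's proof. You are right to rely only on the $n$-independent factor $2$ that the lemma's proof actually delivers, and any such factor suffices here; the factor $1$ in the lemma's statement in fact fails, e.g.\ for $\rho^{\otimes 3}$ versus $\sigma^{\otimes 3}$ with $\rho=\ketbra{0}$, $\sigma=(1-\epsilon)\ketbra{0}+\epsilon\ketbra{1}$ and $\epsilon$ small.
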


\subsection{MSR almost i.i.d.\ sources are Wasserstein almost i.i.d.\ sources}

\begin{prop}[($W_1$ distance from a pure product state)]\label{prop:pure_Wasserstein}
For any $\phi_1, \dots, \phi_n \in \mathcal{D}(\mathcal{H})$ pure states and any $\rho_n \in \mathcal{D}(\mathcal{H^{\otimes n}})$ we have
\begin{equation}
    \frac{1}{n} \|\rho_n - \phi_1 \otimes \dots \otimes \phi_n\|_{W_1} \le \sqrt{\frac{h_2(p) + p \log(d-1)}{2}} + p,
\end{equation}
where $d$ is the dimension of each subsystem and
\begin{equation}
    p = \frac{1}{n} \sum_{i=1}^n (1 - \Tr[\rho_{n,i} \phi_i])
\end{equation}
is the average infidelity between $\rho_{n,i}$ and $\phi_i$.
\end{prop}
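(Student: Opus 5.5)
Write $\phi\coloneqq\phi_1\otimes\dots\otimes\phi_n$ and $p_i\coloneqq 1-\Tr[\rho_{n,i}\phi_i]$, so that $p=\frac1n\sum_i p_i$. The plan is to interpolate through an auxiliary state $\tilde\rho_n$ that is close to $\rho_n$ in $W_1$ at cost about $np$, and to which Theorem~\ref{thm:Marton} applies with a relative entropy of at most $n\big(h_2(p)+p\log(d-1)\big)$. Since $\phi$ is pure, $D(\rho_n\|\phi)=+\infty$ unless $\rho_n=\phi$, so Theorem~\ref{thm:Marton} cannot be used with $\omega=\phi$ directly. We therefore compare instead with a full-rank product reference state and handle the remaining step by a local transport argument.

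\textbf{Step 1 (reference state).} Choose the full-rank product state
\[
\omega=\bigotimes_i \omega_i,\qquad \omega_i=(1-p)\,\phi_i+\frac{p}{d-1}(\id-\phi_i).
\]
Then
\[
D(\rho_n\|\omega)=-S(\rho_n)-\sum_i\Tr[\rho_{n,i}\log\omega_i],
\]
and $-\Tr[\rho_{n,i}\log\omega_i]=-(1-p_i)\log(1-p)-p_i\log\frac{p}{d-1}$. Summing over $i$ gives $n\big(h_2(p)+p\log(d-1)\big)$, because $\sum_ip_i=np$. Hence $D(\rho_n\|\omega)\le n\big(h_2(p)+p\log(d-1)\big)-S(\rho_n)\le n\big(h_2(p)+p\log(d-1)\big)$. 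Theorem~\ref{thm:Marton} then gives
\[
\frac1n\|\rho_n-\omega\|_{W_1}\le\sqrt{\frac{h_2(p)+p\log(d-1)}{2}}.
\]

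\textbf{Step 2 (compare $\omega$ with $\phi$).} This step fails naively: $\omega_i$ and $\phi_i$ are at trace distance $p$, so the standard bound $\|\omega-\phi\|_{W_1}\le\sum_i\frac12\|\omega_i-\phi_i\|_1\cdot 2$ gives $\|\omega-\phi\|_{W_1}\le np$ or $2np$, depending on normalisation. The normalisation must be checked. In the paper's definition a neighbouring-state difference $\tau-\eta$ has weight $1$, and a product of two states differing in one site equals $\frac12\|\omega_i-\phi_i\|_1(\tau-\eta)$ with $\tau,\eta$ states. So each site costs $\frac12\|\omega_i-\phi_i\|_1=p$, and the telescoping sum gives $\|\omega-\phi\|_{W_1}\le np$. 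The triangle inequality then yields the claim with the additive term $p$. I expect the bound on $\|\omega-\phi\|_{W_1}$ to be exactly this telescoping estimate; $\frac12\|\omega_i-\phi_i\|_1=p$ holds because $\omega_i-\phi_i=-p\,\phi_i+\frac{p}{d-1}(\id-\phi_i)$.

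\textbf{Main obstacle.} The main obstacle is making the relative entropy bound of Step 1 tight with the \emph{uniform} mixing weight $p$ rather than the site-wise weights $p_i$. This works because $-\Tr[\rho_{n,i}\log\omega_i]$ is linear in $p_i$ once $\omega_i$ is fixed with the common parameter $p$. With the uniform choice of weight the cross-entropy sums exactly to $n\big(h_2(p)+p\log(d-1)\big)$, and no concavity argument is needed. The degenerate cases $p=0$ and $d=1$ need a separate remark: if $p=0$ then $\rho_n=\phi$ (each marginal is pure, so the state is the product), and both sides vanish.
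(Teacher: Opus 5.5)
Your argument is correct and follows the paper's proof. It uses the same triangle inequality through a product state $\omega$, whose factors put weight on $\phi_i$ and spread the rest uniformly over $\id-\phi_i$. It applies Theorem~\ref{thm:Marton} with $S(\rho_n)\ge 0$ discarded, and bounds the product-to-product term by $\frac{1}{2n}\sum_i\|\omega_i-\phi_i\|_1=p$.

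The only difference is that you use the common weight $p$ in every $\omega_i$, so the cross-entropy sums to $n\bigl(h_2(p)+p\log(d-1)\bigr)$ exactly by linearity. The paper instead uses the site-wise weights $p_i$ and then needs concavity of $h_2$ (Jensen) to reach the same bound. One small correction: $\omega$ is not full rank at $p=1$, but then every $p_i=1$, so $\rho_n$ is supported on the range of $\bigotimes_i(\id-\phi_i)$, which equals $\supp\omega$, and your computation still holds.
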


\begin{proof}
For any $i=1, \dots, n$, let $\rho_{n,i}\coloneqq\Tr_{[n]\setminus\{i\}}\rho_n$ and
\begin{equation}
    \omega_i \coloneqq \Tr[\rho_{n,i} \phi_i] \phi_i + (1 - \Tr[\rho_{n,i} \phi_i]) \frac{\id - \phi_i}{d-1}.
\end{equation}
From the triangle inequality, we have
\begin{equation}
    \frac{1}{n} \|\rho_n - \phi_1 \otimes \dots \otimes \phi_n\|_{W_1} \le \frac{1}{n} \|\rho_n - \omega_1 \otimes \dots \otimes \omega_n\|_{W_1} + \frac{1}{n} \|\omega_1 \otimes \dots \otimes \omega_n - \phi_1 \otimes \dots \otimes \phi_n\|_{W_1}.
\end{equation}
We can also write
\begin{align}
    \frac{1}{n} \|\rho_n - \omega_1 \otimes \dots \otimes \omega_n\|_{W_1} &\leqt{(a)} \sqrt{\frac{1}{2n} D(\rho_n || \omega_1 \otimes \dots \otimes \omega_n)} \nonumber \\
    &= \sqrt{\frac{1}{2n} \left( -S(\rho_n) - \sum_{i=1}^n \Tr[\rho_{n,i} \log \omega_i] \right)} \nonumber \\
    &\leqt{(b)} \sqrt{-\frac{1}{2n} \sum_{i=1}^n \Tr[\rho_{n,i} \log \omega_i]} \nonumber \\
    &= \sqrt{\frac{1}{2n} \sum_{i=1}^n \left( h_2(\Tr[\rho_n \phi_i]) + (1 - \Tr[\rho_n \phi_i]) \log(d-1) \right)} \nonumber \\
    &\leqt{(c)} \sqrt{\frac{h_2(p) + p \log(d-1)}{2}}
\end{align}
where~(a) follows from~\cite[Theorem~2]{De_Palma_2021}; (b)~follows observing that $S(\rho_n) \ge 0$; (c)~is a consequence of Jensen's inequality applied to the concave function $h_2$.
We have
\begin{equation}
    \frac{1}{n} \|\omega_1 \otimes \dots \otimes \omega_n - \phi_1 \otimes \dots \otimes \phi_n\|_{W_1} \eqt{(d)} \frac{1}{2n} \sum_{i=1}^n \|\omega_i - \phi_i\|_1 = \frac{1}{n} \sum_{i=1}^n (1 - \Tr[\rho_n \phi_i]) = p,
\end{equation}
where~(d) holds due to~\cite[Corollary 1]{De_Palma_2021}. The claim follows.
\end{proof}

\begin{cor} Let $\rho_n \in \mathcal{D}(\mathcal{H}_A^{\otimes n})$ be a $\binom{n}{r}$-\emph{almost i.i.d.\ state in} $\rho\in\mathcal{D}(\mathcal{H})$. Then $\rho_n $ is a $(n,\epsilon)$-almost i.i.d.\ state according to $W_1$, with
\bb
    \epsilon= \sqrt{\frac{h_2(r/n) + (r/n) \log(d-1)}{2}} + \frac{r}{n}.
\ee
In particular, if $(\rho_n)_n$ is a MSR almost i.i.d.\ source along $\rho$, then it is also a Wasserstein almost i.i.d.\ source along $\rho$.
\end{cor}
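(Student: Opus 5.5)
We will derive this from Proposition~\ref{prop:pure_Wasserstein}, applied to the purified extension rather than to $\rho_n$ directly. Let $\ket{\psi_\rho}_{AE}$ and $\rho_n^{A^nE^n}$ be as in Definition~\ref{def:DeFinetti_almost_iid}.

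\textbf{Step 1: pass to the extension.} The first step is to show that the $W_1$ distance does not increase under the partial trace over $E^n$, where we view $A^nE^n$ as an $n$-partite system with local subsystems $A_iE_i$. This follows from the primal definition. Suppose $\rho_n^{A^nE^n}-\psi_\rho^{\otimes n}=\sum_i c_i(\tau^{(i)}-\eta^{(i)})$ with $\Tr_{A_iE_i}\tau^{(i)}=\Tr_{A_iE_i}\eta^{(i)}$. Tracing out $E^n$ gives a decomposition of $\rho_n-\rho^{\otimes n}$ in which $\Tr_{A_i}\Tr_{E^n}\tau^{(i)}=\Tr_{A_i}\Tr_{E^n}\eta^{(i)}$. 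Hence
\[
\|\rho_n-\rho^{\otimes n}\|_{W_1}\le \|\rho_n^{A^nE^n}-\psi_\rho^{\otimes n}\|_{W_1}.
\]

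\textbf{Step 2: apply Proposition~\ref{prop:pure_Wasserstein}.} We apply it on $\mathcal{H}_{AE}^{\otimes n}$ with $\phi_i=\psi_\rho$ for every $i$. This bounds the right-hand side of Step~1 by $n\bigl(\sqrt{(h_2(p)+p\log(d'-1))/2}+p\bigr)$, where $p=\frac1n\sum_i\bigl(1-\Tr[\rho^{A_iE_i}_{n,i}\psi_\rho]\bigr)$. The local dimension $d'$ is that of $AE$, not $A$, which is a point to handle carefully below.

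\textbf{Step 3: bound $p\le r/n$.} This is the key estimate. Let $N=\sum_i(\id-\psi_\rho)_i$ be the operator that counts defects. Then $\frac1n\Tr[N\rho_n^{A^nE^n}]=p$. Every vector in $\pazocal V^n_r$ has at least $n-r$ tensor factors equal to $\ket{\psi_\rho}$. Consequently the span of $\pazocal V^n_r$ lies in the subspace where at most $r$ factors are orthogonal to $\psi_\rho$, that is, the direct sum of eigenspaces of $N$ with eigenvalue at most $r$. To justify this, expand $\ket{\omega^{(r)}}$ in a basis containing $\ket{\psi_\rho}$; each resulting product vector is an eigenvector of $N$ with eigenvalue at most $r$. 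Since the support condition~(ii) places $\rho_n^{A^nE^n}$ inside this subspace, $\Tr[N\rho_n^{A^nE^n}]\le r$, so $p\le r/n$. Permutation invariance is not needed here.

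\textbf{Step 4: monotonicity and the main obstacle.} It remains to replace $p$ by $r/n$ in the bound of Step~2. The function $x\mapsto h_2(x)+x\log(d-1)+$ is nondecreasing on $[0,1-1/d]$, where $x\log(d-1)$ plus $h_2$ is the entropy of the distribution $(1-x,x/(d-1),\dots)$, maximised at $x=1-1/d$. So $p\le r/n$ gives the bound whenever $r/n\le 1-1/d$. For larger $r/n$ the claimed $\epsilon$ exceeds $1$ in the relevant regime, and we use the trivial bound $\frac1n\|\cdot\|_{W_1}\le1$.

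The dimension issue, $d'$ versus $d$, is what I expect to be the main obstacle. If $\rho$ has full rank, then $d'=d^2$, and the bound from Step~2 carries $\log(d^2-1)$ instead of $\log(d-1)$. I would try to avoid this by working directly on $A$. Choose $E$ of dimension $\operatorname{rank}\rho\le d$, which is possible without affecting $d'$-independence only up to constants. Alternatively, interpret $d$ in the statement as the dimension of the purified local system. Either way the asymptotic claim is unaffected.

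\textbf{Step 5: sources.} For an MSR source, $r_n=o(n)$. Then $\epsilon_n\to0$, because $h_2(x)\to0$ as $x\to0$. This gives the final assertion that every MSR almost i.i.d.\ source along $\rho$ is a Wasserstein almost i.i.d.\ source along $\rho$.
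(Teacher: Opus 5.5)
Your proposal is correct and follows the paper's proof. You pass to the extension $\rho_n^{A^nE^n}$, apply Proposition~\ref{prop:pure_Wasserstein} with $\phi_i=\psi_\rho$, and use the support condition~(ii) through the defect-counting observable $\sum_i(\id-\psi_\rho)_i$ to get $p\le r/n$; your Steps~1 and~4 simply make explicit two things the paper uses without comment, namely that $W_1$ does not increase under $\Tr_{E^n}$ and that the bound is monotone in $p$. Your dimension remark is also accurate: the paper likewise applies Proposition~\ref{prop:pure_Wasserstein} on $\mathcal{H}_{AE}$, so its $d$ has to be read as the local dimension of $AE$, and shrinking $E$ to $\operatorname{rank}\rho$ would still leave $\log(d\operatorname{rank}\rho-1)$ rather than $\log(d-1)$.
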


\begin{proof}
    Let $\ket{\psi_\rho}_{AE}$ be a purification of $\rho$ and $\rho_n^{A^n E^n}$ be an extension of $\rho_n^{A^n}$ according to Definition~\ref{def:DeFinetti_almost_iid}. In particular, by property (ii) in Definition~\ref{def:DeFinetti_almost_iid}, we immediately get
    \bb
    \sum_{i=1}^n\Tr\big[\rho_n^{A_iE_i} \psi_\rho\big]\geq n-r,
    \ee
    as one sees by observing that $\rho_n^{A^nE^n}$ is supported on span of the eigenvectors of the observable $\sum_{i=1}^n (\psi_\rho)_i \otimes \id^{\otimes [n]\setminus\{i\}}$ corresponding to eigenvalues at least $n-r$. Therefore, $\frac{1}{n} \sum_{i=1}^n \big(1 - \Tr\big[\rho_n^{A_iE_i} \psi_\rho\big]\big)\leq \frac{r}{n}$. By Proposition~\ref{prop:pure_Wasserstein} we conclude that
    \bb
        \frac{1}{n} \big\|\rho_n - \rho^{\otimes n}\big\|_{W_1} \leq \frac{1}{n} \big\|\rho_n^{A^nE^n} - \psi_\rho^{\otimes n}\big\|_{W_1} \leq \sqrt{\frac{h_2(r/n) + (r/n) \log(d-1)}{2}} + \frac{r}{n},
    \ee
    The fact that the right-hand side vanishes when $r$ grows sublinearly in $n$ concludes the proof.
\end{proof}

\subsection{Tightness of the hierarchy via counterexamples}

Now that we have discussed three different notions of almost i.i.d.-ness, it is natural to wonder whether they are really all distinct. In what follows, we answer this question in the affirmative by constructing explicit examples of weakly almost i.i.d.\ states that are not Wasserstein almost i.i.d., and of states belonging to this latter set that are not almost i.i.d.\ in the Mazzola--Sutter--Renner sense.

\subsubsection{Not all the weakly almost i.i.d.\ sources are almost i.i.d.\ Wasserstein sources}

Not all the weakly almost i.i.d.\ sources are almost i.i.d.\ according to the $W_1$ distance. Let us start from a classical counterexample. Define $\tilde p_n$ to be the probability distribution generated as follows:
\begin{itemize}
    \item the symbols $x_j$ with odd index $j\in [n]$ are drawn in an i.i.d.\ fashion according to some distribution $p\in \mathcal{P}(\XX)$ with positive entropy $S(p)>0$;
    \item the symbols with even $j$, instead, are set to be equal to the preceding odd-$j$ symbol.
\end{itemize}  
For some $k\in \N^+$, a random subset $I \subseteq [n]$ of cardinality $k$ will not contain any pair of consecutive indices $\{j,j+1\}$, with $j$ odd, with asymptotically unit probability $\frac{n-2}{n} \frac{n-4}{n} \ldots \frac{n-2k}{n}$. When no such pair is present, the probability distribution of the symbols in $I$ is exactly $p^{\otimes I}$. Therefore, 
\bb
\EE{\substack{I\subseteq [n]\\ |I|=k}} \left\| (p_n)_I - p^{\otimes I} \right\|_1 \leq 2 \left( 1 - \frac{n-2}{n} \frac{n-4}{n} \ldots \frac{n-2k}{n} \right) \tendsn{} 0\, .
\ee
However, it is not difficult to verify that $S(\tilde p_n) = \ceil{n/2} S(p)$, so that
\bb
    \lim_{n\to\infty}\frac{1}{n}S(\tilde p_n)=\frac{1}{2}S(p),
\ee
which would not be possible if $(\tilde p_n)_n$ were an almost i.i.d.\ source according to the $W_1$ distance, due to Theorem~\ref{thm:SW}. Note that, if we take a uniformly random subset of distinct indices $\{i_1,\dots, i_k\}\subseteq [n]$ of size $k=o(\sqrt{n})$, then with high probability no consecutive indices will be chosen. The reader may wonder whether the distinction between weakly almost i.i.d.\ sources and Wasserstein sources showed in the previous counterexamples stems from the sublinear growth of $k$ with respect to $n$. This is actually not true, as the quantum counterexample below shows.

\begin{prop}[\cite{Arnaud_2013}]\label{prop:ame}
For any sufficiently large $n$, there exists a pure state $\Psi_n$ of $n$ qubits such that $(\Psi_n)_I$ is maximally mixed for any subset $I\subset[n]$ of size $|I|\le0.189\,n$.
\end{prop}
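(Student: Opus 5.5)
The statement is a quantum Gilbert--Varshamov bound: a pure stabilizer state whose stabilizer group has no nontrivial element of low weight. I would prove it by a random-stabilizer-state argument combined with a union bound over low-weight Pauli operators.

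\emph{Step 1 (reduction to Pauli expectations).} Expand any $n$-qubit state in the Pauli basis, $\Psi=2^{-n}\sum_{P}\Tr[P\Psi]\,P$, with $P$ ranging over $\{I,X,Y,Z\}^{\otimes n}$. The marginal $(\Psi)_I$ is maximally mixed if and only if $\Tr[P\Psi]=0$ for every non-identity Pauli $P$ whose support is contained in $I$. Hence the claim for all $|I|\le k$ is equivalent to
\[
\Tr[P\Psi]=0 \quad\text{for every non-identity Pauli } P \text{ of weight at most } k .
\]

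\emph{Step 2 (stabilizer states).} Take $\Psi$ to be a stabilizer state with stabilizer group $\mathcal{S}$, which consists of $2^n$ commuting Pauli operators up to sign. For such a state, $\Tr[P\Psi]\neq 0$ exactly when $\pm P\in\mathcal{S}$. So it suffices to find a stabilizer group containing no non-identity element of weight at most $k$, that is, a pure $[[n,0,k+1]]$ stabilizer code.

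\emph{Step 3 (random choice and union bound).} Draw $\Psi$ uniformly at random among stabilizer states, or equivalently apply a uniformly random Clifford unitary to $\ket{0}^{\otimes n}$. The Clifford group acts transitively on the $4^n-1$ non-identity Paulis modulo phase. Each stabilizer group contains $2^n-1$ of them. By symmetry, a fixed non-identity $P$ lies in $\mathcal{S}$ (up to sign) with probability
\[
\frac{2^n-1}{4^n-1}=\frac{1}{2^n+1}.
\]
The number of non-identity Paulis of weight at most $\delta n$ is $\sum_{w\le\delta n}\binom{n}{w}3^w\le 2^{n(h_2(\delta)+\delta\log_2 3)}$ for $\delta\le 1/2$, using the standard entropy bound on binomial sums. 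The union bound then shows that the failure probability is at most
\[
2^{n(h_2(\delta)+\delta\log_2 3-1)},
\]
which is $<1$ for large $n$ whenever $h_2(\delta)+\delta\log_2 3<1$.

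\emph{Step 4 (numerics, the only delicate point).} It remains to check that $\delta=0.189$ satisfies this strict inequality. Numerically, $h_2(0.189)\approx 0.699$ and $0.189\log_2 3\approx 0.300$, so the sum is about $0.9986<1$. The margin is small, so the main care is a correct entropy estimate for the binomial sum; no polynomial prefactor is needed since the bound $\sum_{w\le\delta n}\binom nw\le 2^{nh_2(\delta)}$ is exact for $\delta\le1/2$. With that estimate the exponent is strictly negative, so for all sufficiently large $n$ some stabilizer state has all its marginals on at most $0.189\,n$ qubits maximally mixed.
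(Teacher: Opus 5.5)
Your argument is correct and follows essentially the route behind the statement: the paper gives no proof of its own and simply cites Arnaud--Cerf~\cite{Arnaud_2013}, whose constant $0.189$ is exactly the quantum Gilbert--Varshamov threshold for $[[n,0,d]]$ stabilizer codes (the root $\delta\approx 0.1893$ of $h_2(\delta)+\delta\log_2 3=1$), and your random-Clifford union bound is a standard self-contained derivation of that bound. The only blemish is arithmetic: $h_2(0.189)+0.189\log_2 3\approx 0.99893$, not $0.9986$, which leaves the strict inequality intact (and since $1/(2^n+1)<2^{-n}$, your bound in fact gives such a state for every $n\ge 1$, not only for large $n$).
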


On the one hand, thanks to the local indistinguishability from the maximally mixed state, the source $(\Psi_n)_n$ is weakly almost i.i.d.\ along the maximally mixed state.

\begin{lemma}
The source $(\Psi_n)_n$ defined in Proposition~\ref{prop:ame} is weakly almost i.i.d.\ along the maximally mixed state.
\end{lemma}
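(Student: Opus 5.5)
The plan is to verify the definition~\eqref{eq:def_weak} directly, with $\rho=\id/2$ the maximally mixed qubit state. Fix an arbitrary $k\in\N_+$. Since $k$ is fixed and $n\to\infty$, there is an $n_0(k)$ such that for every $n\ge n_0(k)$ two things hold: $n$ is large enough for Proposition~\ref{prop:ame} to supply the state $\Psi_n$, and $k\le 0.189\,n$. The second condition only needs $n\ge k/0.189$.

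For any such $n$, every subset $I\subseteq[n]$ with $|I|=k$ has size at most $0.189\,n$. Proposition~\ref{prop:ame} therefore gives $(\Psi_n)_I=(\id/2)^{\otimes I}=\rho^{\otimes I}$ exactly. Each term of the uniform average over $I$ vanishes, and so
\bb
\EE{\substack{I\subseteq [n]\\ |I|=k}}\big\|(\Psi_n)_I-\rho^{\otimes I}\big\|_1=0\qquad\forall\, n\ge n_0(k).
\ee
The $\limsup$ in~\eqref{eq:def_weak} is therefore $0$. Because $k$ was arbitrary, $(\Psi_n)_n$ is weakly almost i.i.d.\ along the maximally mixed state.

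As an equivalent route, Proposition~\ref{prop:metrisation} gives the same conclusion through $\|\Psi_n-(\id/2)^{\otimes n}\|_{LV}\le \tfrac12\sum_{k>0.189n}2^{-k}\le 2^{-0.189n}\to0$. There is no real obstacle here. The only point needing care is the order of quantifiers: $k$ is fixed before taking $n\to\infty$, so the linear threshold $0.189\,n$ eventually covers $k$. For the finitely many small $n$ where $\Psi_n$ may be undefined, any choice of state can be made, since these do not affect the $\limsup$.
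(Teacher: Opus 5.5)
Your main argument is correct and is exactly the paper's proof: fix $k$, take $n\ge k/0.189$, and note that every $k$-body marginal of $\Psi_n$ is exactly maximally mixed, so the average in~\eqref{eq:def_weak} vanishes for all large $n$. In your optional LV-norm remark you dropped a factor of $2$: since $\|(\Psi_n)_I-\rho^{\otimes I}\|_1\le 2$, the bound should read $\|\Psi_n-(\id/2)^{\otimes n}\|_{LV}\le\sum_{k>0.189n}2^{-k}\le 2\cdot 2^{-0.189n}$, which still tends to $0$, so the conclusion is unaffected.
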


\begin{proof}
Let us fix $k\in\mathbb{N}$.
For any $n\ge \frac{k}{0.189}$ and any $I\subset[n]$ with $|I|=k$, we have that $(\Psi_n)_I$ is maximally mixed, therefore the source $(\Psi_n)_n$ is weakly almost i.i.d.\ along the maximally mixed state.
\end{proof}

On the other hand, due to the continuity of the von Neumann entropy with respect to the $W_1$ distance, the source $(\Psi_n)_n$ cannot be almost i.i.d.\ along the maximally mixed state with respect to the $W_1$ distance.
Indeed, the following more general result holds:

\begin{prop}\label{prop:cont_vN_W1}
Let the source $(\rho_n)_n$ be almost i.i.d.\ along the state $\rho\in\mathcal{D}(\mathcal{H})$ with respect to the $W_1$ distance.
Then,
\begin{equation}\label{eq:charact_W1}
    \lim_{n\to\infty}\frac{S(\rho_n)}{n} = S(\rho)\,.
\end{equation}
\end{prop}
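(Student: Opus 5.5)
The plan is to apply Theorem~\ref{thm:SW} directly, with $\sigma = \rho^{\otimes n}$, and then pass to the limit. Set
\[
w_n \coloneqq \frac{1}{n}\|\rho_n-\rho^{\otimes n}\|_{W_1}.
\]
By the definition of a Wasserstein almost i.i.d.\ source, $w_n\to 0$ as $n\to\infty$. Theorem~\ref{thm:SW} then gives
\[
\frac{1}{n}\big|S(\rho_n)-S(\rho^{\otimes n})\big| \le h_2(w_n) + w_n\log\big(d^2-1\big), \qquad d=\dim\mathcal{H}.
\]

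Next I will use additivity of the von Neumann entropy on tensor products, which gives $S(\rho^{\otimes n}) = nS(\rho)$. The left-hand side therefore becomes $\big|S(\rho_n)/n - S(\rho)\big|$.

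It remains to show that the right-hand side tends to zero. The binary entropy $h_2$ is continuous on $[0,1]$ with $h_2(0)=0$, so $h_2(w_n)\to 0$. The second term $w_n\log(d^2-1)$ also vanishes, since $d$ is fixed.

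Two minor points need care. First, $w_n\le 1$, so that $h_2(w_n)$ is defined; this holds because $\|X\|_{W_1}\le n$ for any difference of states. In any case only large $n$ matter, and there $w_n$ is small. Second, the degenerate case $d=1$ is trivial, since then $\log(d^2-1)$ is ill-defined but every state is the same. Neither point is a real obstacle: once Theorem~\ref{thm:SW} is available, the argument is a direct limit, which yields~\eqref{eq:charact_W1}.

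Finally, the application to the state of Proposition~\ref{prop:ame} is immediate. There $S(\Psi_n)=0$, while $S(\id/2)=\log 2\neq 0$, so the pure-state source $(\Psi_n)_n$ cannot be Wasserstein almost i.i.d.\ along the maximally mixed state.
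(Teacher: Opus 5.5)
Your proposal is correct and follows the paper's proof: both set $w_n = \frac{1}{n}\|\rho_n-\rho^{\otimes n}\|_{W_1}\to 0$, apply Theorem~\ref{thm:SW} with $\sigma=\rho^{\otimes n}$, and use $S(\rho^{\otimes n})=nS(\rho)$ to get $\big|S(\rho_n)/n - S(\rho)\big| \le h_2(w_n)+w_n\log(d^2-1)\to 0$. Your remarks on $w_n\le 1$ and the trivial case $d=1$ are details the paper leaves implicit.
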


\begin{proof}
For any $n\in\mathbb{N}$, let
\begin{equation}
    w_n = \frac{1}{n}\left\|\rho_n - \rho^{\otimes n}\right\|_{W_1}\,.
\end{equation}
Since the source $(\rho_n)_n$ is almost i.i.d.\ along $\rho$ with respect to the $W_1$ distance, we have
\begin{equation}
    \lim_{n\to\infty}w_n = 0\,.
\end{equation}
We then have from Theorem~\ref{thm:SW}
\begin{equation}
    \limsup_{n\to\infty}\left|\frac{S(\rho_n)}{n} - S(\rho)\right| \le \limsup_{n\to\infty}\left(h_2(w_n) + w_n\log\left(\left(\dim\mathcal{H}\right)^2-1\right)\right) = 0\,.
\end{equation}
The claim follows.
\end{proof}

\begin{cor}
No source of pure states, including the source defined in Proposition~\ref{prop:ame}, can be almost i.i.d.\ along the maximally mixed state with respect to the $W_1$ distance.
\end{cor}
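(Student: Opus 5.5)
The plan is to apply Proposition~\ref{prop:cont_vN_W1} directly and argue by contradiction. Suppose $(\rho_n)_n$ is a source of pure states, so that each $\rho_n = \ketbra{\psi_n}$ for some $\ket{\psi_n}\in\mathcal{H}^{\otimes n}$, and suppose it is almost i.i.d.\ along the maximally mixed state $\id/d$ with respect to the $W_1$ distance, where $d=\dim\mathcal{H}\geq 2$.

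First, Proposition~\ref{prop:cont_vN_W1} gives
\begin{equation}
    \lim_{n\to\infty}\frac{S(\rho_n)}{n} = S\!\left(\frac{\id}{d}\right) = \log d > 0 .
\end{equation}
Second, every $\rho_n$ is pure, so $S(\rho_n)=0$ for all $n$, and therefore $\lim_{n\to\infty} S(\rho_n)/n = 0$. These two limits are incompatible because $\log d>0$, which is the desired contradiction.

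The same reasoning covers any pure-state source along any reference state $\rho$ with $S(\rho)>0$. For the source of Proposition~\ref{prop:ame} we have $d=2$, so the limiting entropy would have to be $\log 2$. No step is a real obstacle. The only point to state explicitly is that $d\geq 2$; if $d=1$, the maximally mixed state is itself pure and the statement becomes vacuous.

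Combined with the preceding lemma, which shows that $(\Psi_n)_n$ is weakly almost i.i.d., this separates the weakly almost i.i.d.\ sources from the Wasserstein almost i.i.d.\ ones.
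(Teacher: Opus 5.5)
Your proposal is correct and is exactly the argument the paper intends: the corollary is stated as an immediate consequence of Proposition~\ref{prop:cont_vN_W1}, because a pure-state source has $S(\rho_n)/n=0$ for every $n$ while $S(\id/d)=\log d>0$. One small quibble: for $d=1$ the claim is not vacuous but false, since every state then coincides with $\rho^{\otimes n}$, and this is precisely why the assumption $d\geq 2$ you flag is needed.
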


\subsubsection{Not all the Wasserstein almost i.i.d.\ sources are MSR almost i.i.d.\ sources}
Let $\tau\coloneqq\id_2/2$ be the maximally mixed state on $\mathcal{H}\coloneqq\mathbb{C}^{2}$, and let $u$ be the uniform probability distribution on $\{0,1\}$. For all $n\geq 1$, we define $\xi_n\in\mathcal{H}^{\otimes n}$ as
\bb\label{eq:counterex}
    \xi_n\coloneqq\begin{cases}
        \displaystyle{\frac{1}{\big|T_u^{(n)}\big|}\sum_{x^n\in T_{u}^{(n)}}\ketbra{x^{(1)}\cdots x^{(n)}}} & n \text{ even}\\[1.7em]
        \xi_{n-1}\otimes\tau &  n \text{ odd}\\
    \end{cases} 
\ee
where $T_{u}^{(n)}$ is the type class containing all strings $x^n$ with an equal number of $0$'s and $1$'s; formally, $T_{u}^{(n)}\coloneqq\{ x^n\in \{0,1\}^n: N(0|x^n)=n/2\}$. 
We need two technical results, whose proof is deferred to the end of this section.

\begin{prop}\label{prop:quantitative_Wass}
    Let $n\geq 1$ be any integer, $t\in\mathcal{T}_{n}$ any type on an alphabet of cardinality $d\coloneqq |\XX|$, and $u_{t,n}$ the uniform distribution of sequences $x^n$ of type $t$. Then
    \bb
        \frac{1}{n}\|u_{t,n}-t^{\otimes n}\|_{W_1}\leq \sqrt{\frac{(d-1) \log (n+1)}{2n}} .
    \ee
\end{prop}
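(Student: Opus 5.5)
The plan is to use Theorem~\ref{thm:Marton} (the Marton-type inequality of~\cite[Theorem 2]{De_Palma_2021}), viewing the classical distributions as diagonal density matrices. The reference state $t^{\otimes n}$ is a product state, so
\begin{equation}
    \frac1n\|u_{t,n}-t^{\otimes n}\|_{W_1}\le \sqrt{\frac{1}{2n}D(u_{t,n}\|t^{\otimes n})}.
\end{equation}
It therefore suffices to show that $D(u_{t,n}\|t^{\otimes n})\le (d-1)\log(n+1)$.

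\textbf{Computing the relative entropy.} The distribution $u_{t,n}$ is supported on the type class $T_t^{(n)}$. On every string $x^n\in T_t^{(n)}$ the product distribution takes the same value,
\begin{equation}
    t^{\otimes n}(x^n)=\prod_x t(x)^{n t(x)}=2^{-nH(t)},
\end{equation}
where logarithms are in the same base as the one in the statement. Hence
\begin{equation}
    D(u_{t,n}\|t^{\otimes n})=\sum_{x^n\in T_t^{(n)}}\frac{1}{|T_t^{(n)}|}\log\frac{1/|T_t^{(n)}|}{2^{-nH(t)}}=-\log\big(|T_t^{(n)}|\,2^{-nH(t)}\big)=-\log t^{\otimes n}\big(T_t^{(n)}\big).
\end{equation}
So the relative entropy equals minus the logarithm of the probability that an i.i.d.\ sample from $t$ has type exactly $t$.

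\textbf{The main step.} The key ingredient is the standard method-of-types lower bound $t^{\otimes n}(T_t^{(n)})\ge (n+1)^{-(d-1)}$. Equivalently, $|T_t^{(n)}|\ge (n+1)^{-(d-1)}2^{nH(t)}$, which gives exactly the factor $(d-1)\log(n+1)$.

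I will prove it as follows. The type class $T_t^{(n)}$ has the largest $t^{\otimes n}$-probability among all type classes; this follows from the classical ratio estimate comparing $t^{\otimes n}(T_{t'})$ with $t^{\otimes n}(T_t)$ via factorials, using $m!/\ell!\ge \ell^{m-\ell}$. The probabilities of all type classes sum to $1$. I must be careful to count types sharply: the number of types is $\binom{n+d-1}{d-1}\le (n+1)^{d-1}$. This sharper count, rather than the crude bound $(n+1)^d$, is what produces $d-1$ instead of $d$.

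Combining the pieces yields
\begin{equation}
    \frac1n\|u_{t,n}-t^{\otimes n}\|_{W_1}\le\sqrt{\frac{(d-1)\log(n+1)}{2n}},
\end{equation}
which is the statement of Proposition~\ref{prop:quantitative_Wass}.

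I expect this type-counting step to be the only real obstacle. The reduction to a relative entropy via Theorem~\ref{thm:Marton} and the constancy of $t^{\otimes n}$ on $T_t^{(n)}$ are routine.
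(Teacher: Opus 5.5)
Your proposal is correct and follows essentially the same route as the paper: apply the classical case of Theorem~\ref{thm:Marton}, rewrite $D(u_{t,n}\|t^{\otimes n}) = nH(t) - \log|T_t^{(n)}| = -\log t^{\otimes n}(T_t^{(n)})$, and bound this by $(d-1)\log(n+1)$ using $|T_t^{(n)}| \ge \binom{n+d-1}{d-1}^{-1} 2^{nH(t)} \ge (n+1)^{-(d-1)} 2^{nH(t)}$. The only difference is that you derive this type-class bound yourself (maximal type class plus the sharp count of types), whereas the paper cites Csisz\'ar--K\"orner, Lemma~2.3; your statement of the bound also has the correct exponent signs, which are garbled in the paper's display ($(n+1)^{d-1}2^{-nH(t)}$ there should read $(n+1)^{-(d-1)}2^{nH(t)}$).
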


\begin{lemma}\label{lem:f} Let $n,r$ be two positive integers, with $n$ even, such that $r<n/2$; let $f:\{0,1\}^n\to\mathbb R$ be a function that can be written as
\begin{equation}\label{eq:fl}
f=\sum_{\ell} f_\ell^{(r)},
\end{equation}
where, for all $\ell$, the function $f_\ell^{(r)}$ depends at most on $r$ coordinates of the input string. 
If $f(x)=0$ whenever $\|x\|_1\neq n/2$, then $f\equiv 0$.
\end{lemma}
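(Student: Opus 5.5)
The plan is to read $f$ as a multilinear polynomial of degree at most $r$. Then I restrict it to a small subcube that meets the middle slice $\{x:\|x\|_1=n/2\}$ in exactly one point, and show that $f$ must also vanish at that point. The only structural input is a standard fact: a function on $\{0,1\}^m$ has degree $m$ in its multilinear (Möbius) expansion exactly when its alternating sum over the cube is nonzero.

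\emph{Step 1 (degree bound).} Every function on $\{0,1\}^n$ that depends on at most $r$ coordinates is a multilinear polynomial in those coordinates of degree at most $r$. By \eqref{eq:fl}, $f$ is therefore a multilinear polynomial of degree at most $r$. This property is preserved under restriction: if we fix some coordinates to constants and set others equal to new variables $y_j$ (each original coordinate either fixed or equal to a single $y_j$), the result is again a function of degree at most $r$ in the $y$'s.

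\emph{Step 2 (restriction to a subcube).} Fix an arbitrary $x$ with $\|x\|_1=n/2$; we must show $f(x)=0$. Let $Z=\{i: x_i=0\}$, so that $|Z|=n/2\geq r+1$ because $r<n/2$. Choose distinct indices $z_1,\dots,z_{r+1}\in Z$ and set $m\coloneqq r+1$. For $y\in\{0,1\}^m$ define $x(y)$ by $x(y)_{z_j}=y_j$ and $x(y)_i=x_i$ for every other $i$, and let $g(y)\coloneqq f(x(y))$. Then $\|x(y)\|_1=n/2+\|y\|_1$, so $x(y)$ lies off the middle slice whenever $y\neq 0$. The hypothesis gives $g(y)=0$ for all $y\neq 0$, while $g(0)=f(x)$. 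By Step 1, $g$ has degree at most $r=m-1$.

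\emph{Step 3 (top coefficient).} In the unique multilinear expansion $g(y)=\sum_{S\subseteq[m]}\hat g_S\prod_{j\in S}y_j$, Möbius inversion gives
\[
\hat g_{[m]}=\sum_{y\in\{0,1\}^m}(-1)^{m-\|y\|_1}g(y).
\]
Since $g$ vanishes on all nonzero $y$, this sum reduces to $\hat g_{[m]}=(-1)^m g(0)$. Because $\deg g\le m-1$, we have $\hat g_{[m]}=0$, hence $f(x)=g(0)=0$. Since $x$ was an arbitrary point of the slice, $f\equiv 0$.

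I expect no serious obstacle. The only points needing care are that restriction does not increase degree, which holds because each $f^{(r)}_\ell$ restricts to a function of at most $r$ of the $y_j$, and that there are enough zero coordinates, which is exactly where $r<n/2$ is used.
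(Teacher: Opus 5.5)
Your proof is correct, but it takes a different route from the paper's. The paper never touches the middle slice. It writes $f=\sum_{|S|\le r}c_S\prod_{i\in S}x_i$ and applies M\"obius inversion, $c_S=\sum_{T\subseteq S}(-1)^{|S|-|T|}f(1_T)$, to every coefficient at once. Each $1_T$ with $T\subseteq S$ has weight at most $r<n/2$, so every $f(1_T)$ vanishes and hence every $c_S=0$.

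You argue pointwise instead. For each $x$ on the slice, you restrict $f$ to an $(r+1)$-dimensional subcube spanned by zero coordinates of $x$. On that subcube the restriction vanishes everywhere except at $x$, so $\pm f(x)$ is the top M\"obius coefficient of the restriction, and the degree bound forces it to be zero. Your steps are sound: restriction cannot raise the degree, and $|Z|=n/2\ge r+1$ guarantees enough free coordinates.

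Both arguments rest on the same inversion formula, so the difference is in how it is deployed:
\begin{itemize}
\item The paper's version is shorter and avoids the restriction step. It also proves a stronger interpolation fact: a function of degree at most $r$ that vanishes on the Hamming ball $\{x:\|x\|_1\le r\}$ vanishes identically. Off the ball, the hypothesis is only needed to place that ball off the slice.
\item Your version is local. It shows that a function of degree at most $r$ vanishing on all but one point of some $(r+1)$-dimensional subcube also vanishes at that point. This applies to many supports other than a slice.
\end{itemize}
In each proof the assumption $r<n/2$ is used exactly once: for the paper, to keep the radius-$r$ ball off the slice; for you, to supply $r+1$ zero coordinates.
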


As an immediate consequence of Proposition~\ref{prop:quantitative_Wass}, we see that $(\xi_n)_{n}$ is a Wasserstein almost i.i.d.\ source along $\tau$. Now we leverage Lemma~\ref{lem:f} to show that $(\xi_n)_n$ cannot be a Mazzola--Sutter--Renner almost i.i.d.\ source along $\tau$. More precisely, the following holds.

\begin{prop} For any arbitrary $n\geq 2$ and $r\leq (n-1)/4$, the (classical) state $\xi_n$ defined in~\eqref{eq:counterex} is not $\binom{n}{r}$-almost i.i.d.\ in the sense of Definition~\ref{def:DeFinetti_almost_iid}. Therefore, the sequence $(\xi_n)_n$ is not a MSR almost i.i.d.\ source along $\tau$.
\end{prop}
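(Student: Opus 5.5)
The plan is to argue by contradiction, exploiting the structure of the span of $\pazocal{V}^n_r$ in the computational basis. Suppose $\xi_n$ is $\binom{n}{r}$-almost i.i.d.\ in $\tau$. Then there are a purification $\ket{\psi_\tau}_{AE}$ and an extension $\rho_n^{A^nE^n}$ supported on $\mathrm{span}\,\pazocal{V}^n_r(\mathcal{H}_{AE},\ket{\psi_\tau})$. All purifications of $\tau$ differ by an isometry on $E$. If $W$ is such an isometry, then $(\id\otimes W)^{\otimes n}$ maps $\pazocal{V}^n_r$ built on one purification onto $\pazocal{V}^n_r$ built on the other. Hence we may assume $E\cong\mathbb{C}^2$ and $\ket{\psi_\tau}=\ket{\Phi}=(\ket{00}+\ket{11})/\sqrt2$, so that $\ket{\Phi}^{\otimes m}=2^{-m/2}\sum_{y\in\{0,1\}^m}\ket{y}_A\ket{y}_E$.

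Write the spectral decomposition $\rho_n^{A^nE^n}=\sum_j p_j\ketbra{v_j}$ with $p_j>0$ and every $\ket{v_j}\in\mathrm{span}\,\pazocal{V}^n_r$. Since $\xi_n$ is diagonal, for every $x\in\{0,1\}^n$ we have
\[
\bra{x}\xi_n\ket{x}=\sum_{e}\sum_j p_j\,\big|\big(\bra{x}_A\otimes\bra{e}_E\big)\ket{v_j}\big|^2 .
\]
Every summand on the right is nonnegative. So whenever $\bra{x}\xi_n\ket{x}=0$, we get $(\bra{x}\otimes\bra{e})\ket{v_j}=0$ for all $e$ and all $j$.

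For $n$ even, $\bra{x}\xi_n\ket{x}=0$ exactly when $\|x\|_1\neq n/2$. Fix a vector $\ket{v}$ in the span and a pattern $s\in\{0,1\}^n$, and define $G_s(x)\coloneqq(\bra{x}_A\otimes\bra{x\oplus s}_E)\ket{v}$. I then verify that $G_s$ is a sum of $r$-juntas. For a generator $U_\pi(\ket{\Phi}^{\otimes n-r}\otimes\ket{\omega^{(r)}})$, let $J$ with $|J|=r$ be the set of sites carrying $\omega$. The corresponding coefficient is
\[
2^{-(n-r)/2}\prod_{i\notin J}\delta_{s_i,0}\;\bra{x_J,(x\oplus s)_J}\omega\rangle ,
\]
which, for fixed $s$, depends only on $x_J$. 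By linearity, $G_s$ for a general $\ket{v}$ is a sum of functions each depending on at most $r$ coordinates.

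Applying this to each $\ket{v_j}$, the previous paragraph shows that $G_s$ vanishes on every unbalanced $x$. Since $r\le (n-1)/4<n/2$, Lemma~\ref{lem:f} gives $G_s\equiv0$ for every $s$. Every pair $(x,e)$ can be written as $(x,x\oplus s)$ for $s=x\oplus e$, so every coefficient of $\ket{v_j}$ vanishes, i.e.\ $\ket{v_j}=0$. This contradicts $\Tr\rho_n^{A^nE^n}=1$.

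For $n$ odd, $\xi_n=\xi_{n-1}\otimes\tau$, so $\bra{x}\xi_n\ket{x}=0$ whenever the first $n-1$ bits of $x$ are unbalanced. Fix the last bit $x_n=b$. The restriction of $G_s$ to such strings is still a sum of functions of at most $r$ of the first $n-1$ coordinates: fixing a variable cannot increase a junta's support. It vanishes on all unbalanced strings of even length $n-1$. Since $r\le(n-1)/4<(n-1)/2$, Lemma~\ref{lem:f} applies with $n-1$ in place of $n$ and kills this restriction for each $b$; the rest of the argument is unchanged.

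For the sequence statement, an MSR source would require $\xi_n$ to be $\binom{n}{r_n}$-almost i.i.d.\ with $r_n=o(n)$. Eventually $r_n\le(n-1)/4$, which is impossible by the above.

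The step I expect to need the most care is the passage from the diagonal of the reduced state to individual amplitudes. Positivity gives the vanishing of every $(\bra{x}\otimes\bra{e})\ket{v_j}$, not merely of their sum, and this holds only because all $p_j$ are strictly positive. The other delicate point is choosing the right family of functions $G_s$ (pairing $x$ with $x\oplus s$) so that each is genuinely a sum of $r$-juntas in $x$ alone. The $\ket{\Phi}^{\otimes(n-r)}$ part forces $e_i=x_i$ outside $J$, which is exactly what makes this work.
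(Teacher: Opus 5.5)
Your proof works (modulo one small repair below), and it takes a sharper route than the paper's. The paper applies Lemma~\ref{lem:f} to the diagonal $f(x)=\langle x|\xi_n|x\rangle=\sum_i p_i\langle x|X_iX_i^\dagger|x\rangle$ of the reduced state. This function is quadratic in the amplitudes, so it is only a sum of $2r$-juntas, and the lemma then needs $2r<n/2$; that is where the hypothesis $r\le (n-1)/4$ comes from. You instead use positivity of each term $p_j|(\bra{x}\otimes\bra{e})\ket{v_j}|^2$ to push the vanishing down to the individual amplitudes. The reparametrisation $e=x\oplus s$ then makes each slice $G_s$ linear in the amplitudes, hence a sum of $r$-juntas. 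As a result you only need $r<n/2$ for even $n$ (and $r<(n-1)/2$ for odd $n$). So your argument proves the proposition for a range of $r$ roughly twice as large. It also covers the odd case at the boundary $r=(n-1)/4$ (e.g.\ $n=5$, $r=1$), where the paper's count gives $2r=(n-1)/2$ and Lemma~\ref{lem:f} does not literally apply. The paper's version is shorter: it needs neither the eigen-decomposition into amplitudes nor the bookkeeping with the patterns $s$.

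The repair concerns your reduction to $E\cong\mathbb{C}^2$. When $\dim E>2$, the isometry $(\id\otimes W)^{\otimes n}$ maps the $\ket{\Phi}$-based span into, not onto, the $\ket{\psi_\tau}$-based one, because the defects $\ket{\omega^{(r)}}$ may use the extra dimensions of $E$. So the reduction is not justified as written (the paper silently makes the same restriction). It is also unnecessary. Pick an orthonormal basis $\{\ket{k}\}$ of $E$ with $\ket{\psi_\tau}=(\ket{0}\ket{0}+\ket{1}\ket{1})/\sqrt{2}$. For $s\in\{0,\dots,\dim E-1\}^n$, set $e_i=x_i\oplus s_i$ if $s_i\le 1$ and $e_i=s_i$ otherwise. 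Then every site $i\notin J$ contributes the factor $2^{-1/2}\delta_{s_i,0}$, each $G_s$ is again a sum of $r$-juntas, and $(x,s)\mapsto(x,e)$ is a bijection, so the rest of your argument goes through unchanged. Finally, Lemma~\ref{lem:f} is stated for real-valued functions, so apply it to the real and imaginary parts of $G_s$ separately.
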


\begin{proof}
Let $\xi_n^{A^nE^n}$ be an arbitrary extension of $\xi_n^{A^n}$ acting on the Hilbert space of the Hilbert--Schmidt operators on $\left(\mathbb{C}^2\right)^{\otimes n}$.
Let
\begin{equation}
    \xi^{A^nE^n}_n = \sum_i p_i\,\|X_i\rrangle\llangle X_i\|\,,\qquad \xi^{A^n}_n = \sum_i p_i\,X_i\,X_i^\dag\,,
\end{equation}
where we denote with $\|X\rrangle$ the linear operator $X$ meant as a vector in the Hilbert space of the Hilbert--Schmidt operators.
If $\xi_n^{A^n}$ is a $\binom{n}{r}$-almost i.i.d.\ state, then
\begin{itemize}
    \item it must have $\tau^{\otimes n}$ as a reference i.i.d.\ state, as the single qubit marginal converges to $\tau$ (cf. \cite[Proposition~2.6]{Mazzola_2026});
        \item we have, for all indices $i$, $\|X_i\rrangle\in {\rm span}\pazocal{V}^n_r(\mathcal{H}_A, \ket{\tau})$, where $\ket\tau$ is a purification of $\tau$.
\end{itemize}
Hence, if $\rho$ is $\binom{n}{r}$-almost i.i.d., there must exist an extension with
\begin{equation}
X_i \in \mathrm{span}\left\{U_\pi\left(|x\rangle\langle y|\otimes\id_2^{\otimes\left(n-r\right)}\right)U_\pi^\dagger : \pi\in S_n,\,x,\,y\in\{0,1\}^r\right\}\qquad\forall\;i\,.
\label{eq:X_i_span}
\end{equation}
Then, for $n$ even, let us define the function 
\bb
    f(x)\coloneqq \braket{x|\rho_n|x}= \sum_ip_i\braket{x|X_iX_i^\dagger|x},
\ee
which is a sum of terms that depend at most on $2r$ coordinates of $x\in\{0,1\}^n$, due to~\eqref{eq:X_i_span}. By~\eqref{eq:counterex}, for all $x$ such that $\|x\|_1\neq n/2$, we must have $f(x)=0$. Therefore, by Lemma~\ref{lem:f}, whenever $2r<n/2$ we get $f\equiv 0$, which contradicts the fact that $\rho_n\neq 0$. If $n=2k+1$ is odd, the same argument works with $f_0(x)\coloneqq \sum_ip_i\braket{x_0|X_iX_i^\dagger|x_0}$, where $x_0=(x,0)$, which is a function of an odd number of coordinates $x=(x_1, \dots, x_{2k})$ that must vanish when $\|x\|_1\neq k$, and with $f_1(x)\coloneqq \sum_ip_i\braket{x_1|X_iX_i^\dagger|x_1}$, where $x_1=(x,1)$. This concludes the proof.
\end{proof}

\begin{proof}[Proof of Proposition~\ref{prop:quantitative_Wass}.]
The classical case of Theorem~\ref{thm:Marton} gives us immediately
\begin{align}
\big\|u_{t,n} - t^{\otimes n}\big\|_{W_1}^2 &\leq \frac{n}{2} D(u_{t,n}\|t^{\otimes n}) \nonumber\\
&= \frac{n}{2} \left( - \log \big| T_u^{(n)}\big| - \sum_{i=1}^n \sum_x (u_{t,n})_i(x) \log t(x)\right) \nonumber\\
&= \frac{n}{2} \left( - \log \big| T_u^{(n)}\big| - n \sum_x t(x) \log t(x)\right) \\
&= \frac{n}{2} \left( n H(t) - \log \big| T_u^{(n)}\big|\right) \nonumber\\
&\leq \frac{n}{2} (d-1) \log(n+1) , \nonumber
\end{align}
where $(u_{t,n})_i$ denotes the marginal of $u_{t,n}$ on the $i^\text{th}$ symbol, which is clearly equal to $t$, and, in the last inequality, we employed the standard estimate $\big| T_u^{(n)}\big| \geq (n+1)^{d-1}\, 2^{-nH(t)}$ on the size of a type class~\cite[Lemma~2.3]{CSISZAR-KOERNER}.
\end{proof}

\begin{rem}
An alternative direct proof leverages the interpretation of $\|u_{t,n} - t^{\otimes n}\|_{W_1}$ as earth mover's distance: since most of the weight of $t^{\otimes n}$ is concentrated on types that are $\sqrt{n}$-close to $t$, redistributing the weight of $u_{t,n}$ to obtain $t^{\otimes n}$ involves moving a total (approximately unit) probability weight by Hamming distance $\sim \sqrt{n}$.
\end{rem}

\begin{proof}[Proof of Lemma~\ref{lem:f}.]
Let $x=(x_1,\dots,x_n)\in\{0,1\}^n$. If $g$ is a function depending only on the coordinates $\{x_i\}_{i\in S}$ for some $S\subseteq[n]$, then we can write
\bb
	g(x)=\sum_{S'\subseteq S}c_{S'}\prod_{i\in S'}x_i
\ee
for some $c_{S'}\in \mathbb{R}$. Therefore, using this property for each $f_\ell$ in~\eqref{eq:fl}, we get that $f$ is a multilinear polynomial of degree at most $r$.
\begin{equation}\label{eq:fc}
	f(x)=\sum_{\substack{S\subseteq[n]\\ |S|\le r}} c_S \prod_{i\in S} x_i
\end{equation}
Using a standard inclusion-exclusion approach (M\"obius inversion), we can retrieve the coefficients $c_S$ by considering the action of the function on indicator vectors $1_T\in\{0,1\}^n$ of the sets $T\subseteq S$:
\begin{equation}\label{eq:mob}
c_S=\sum_{T\subseteq S}(-)^{|S|-|T|}f(1_T)
\end{equation}
Eq.~\eqref{eq:mob} can be easily proved as follows. Let us notice that, for all $T'\subset S$, we have
\begin{equation}
	\sum_{T:\,T'\subseteq T\subseteq S}(-)^{|T|}=(-)^{|T'|}\sum_{T_0:\,T_0 \subseteq S\setminus T'}(-)^{|T_0|}=(-)^{|T'|}\prod_{i\in S\setminus T'}(1-1)=0.
\end{equation}
Hence, 
\bb
	\sum_{T\subseteq S}(-)^{|S|-|T|}f(1_T)&=\sum_{T\subseteq S}(-)^{|S|-|T|}\sum_{T'\subseteq T}c_{T'}\\
	&= 
	\sum_{T'\subseteq S}\sum_{T'\subseteq T\subseteq S}(-)^{|S|-|T|}c_{T'}\\
	&= 
	c_S+\sum_{T'\subset S}(-)^{|S|}\sum_{T'\subseteq T\subseteq S}(-)^{|T|}c_{T'}=c_S,
\ee
which proves~\eqref{eq:mob}.
Taking any $S\subseteq[n]$ with $|S|\leq r$, for all $T\subseteq S$ we have $\|1_T\|_1\leq r<n/2$, hence $f(1_T)=0$ by hypothesis. By~\eqref{eq:mob}, we conclude that $c_S=0$ in every term of the sum~\eqref{eq:fc}.
\end{proof}

\subsection{Asymptotic upper semi-continuity of the von Neumann entropy}

In Proposition \ref{prop:cont_vN_W1} we have showed that the normalised von Neumann entropy of a $W_1$ almost i.i.d.\ source is asymptotically continuous. In this section, we will prove that -- among all weakly almost i.i.d.\ sources -- such property is a characterising signature of Wasserstein almost i.i.d.\ sources. More precisely, we will prove that a weak almost i.i.d.\ source is a $W_1$ almost i.i.d.\ source if and only if \eqref{eq:charact_W1} holds (see Propositions \ref{prop:cont_vN_W1} and \ref{prop:if}). We start by mentioning a general result on the von Neumann entropy of weakly almost i.i.d.\ sources.

\begin{prop}\label{prop:usc}
    Let $(\rho_n)_n$ be a weakly almost i.i.d.\ source along $\rho$. Then, the normalised von Neumann entropy of $(\rho_n)_n$ is asymptotically upper semi-continuous, i.e.\
    \bb\label{eq:claim}
        \limsup_{n\to\infty}\frac{1}{n}S(\rho_n)\leq S(\rho).
    \ee
\end{prop}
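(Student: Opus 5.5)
We will use subadditivity of the von Neumann entropy over blocks of fixed size $k$, together with continuity of entropy on a fixed $k$-body system. Fix $k\in\mathbb{N}_+$ and write $n = mk + s$ with $0\le s<k$. For any partition $[n]=I_1\sqcup\dots\sqcup I_m\sqcup R$ with $|I_j|=k$ and $|R|=s$, subadditivity gives
\begin{equation}
S(\rho_n)\le \sum_{j=1}^m S\big((\rho_n)_{I_j}\big) + s\log d,
\end{equation}
where $d=\dim\mathcal{H}$.

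The partition is fixed, so we cannot yet use the weak almost i.i.d.\ assumption, which only controls \emph{averages} over $I$. We therefore average over a uniformly random permutation $\pi\in S_n$ applied to a fixed partition. Each block $\pi(I_j)$ is then a uniformly random $k$-subset, so the averaged inequality reads
\begin{equation}
S(\rho_n)\le m\,\EE{\substack{I\subseteq [n]\\ |I|=k}} S\big((\rho_n)_I\big) + k\log d.
\end{equation}

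Next we compare $S((\rho_n)_I)$ with $S(\rho^{\otimes k}) = kS(\rho)$ via the Fannes--Audenaert inequality on the $d^k$-dimensional space. Setting $T_I\coloneqq\frac12\|(\rho_n)_I-\rho^{\otimes k}\|_1$, it gives
\begin{equation}
S\big((\rho_n)_I\big)\le kS(\rho) + T_I\, k\log d + h_2(T_I).
\end{equation}
We take expectations and use concavity of $h_2$ (Jensen) to get $\EE{}\,h_2(T_I)\le h_2(\EE{}\,T_I)$. Writing $t_{n,k}\coloneqq\EE{\substack{I\subseteq [n]\\ |I|=k}} T_I$, the weak almost i.i.d.\ hypothesis~\eqref{eq:def_weak} says exactly that $t_{n,k}\to0$ as $n\to\infty$ for fixed $k$. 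Combining, and using $m\le n/k$,
\begin{equation}
\frac{1}{n}S(\rho_n)\le S(\rho) + t_{n,k}\log d + \frac{1}{k}h_2(t_{n,k}) + \frac{k\log d}{n}.
\end{equation}

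Taking $\limsup_{n\to\infty}$ with $k$ fixed, every error term vanishes, because $t_{n,k}\to 0$, $h_2$ is continuous at $0$, and $k\log d/n\to0$. This yields~\eqref{eq:claim}; in fact it holds for every $k$, so we do not even need $k\to\infty$.

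The one delicate point is the averaging step: a naive choice of blocks gives no control, and the weak hypothesis can only enter once we symmetrise over $S_n$ so that each block is a uniformly random $k$-subset. A secondary detail is applying Fannes--Audenaert in dimension $d^k$, which produces the factor $k\log d$. After division by $k$ this contributes only $t_{n,k}\log d$ per site, and that term vanishes.
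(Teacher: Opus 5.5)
Your argument is correct and is essentially the paper's proof. The paper uses the same three ingredients (subadditivity, the Fannes--Audenaert bound, and concavity of $h_2$), but applies them only to single-site marginals, which is exactly your argument at $k=1$. Since your bound holds for every fixed $k$, you could take $k=1$ directly: there the sum over sites already equals $n$ times the average over $|I|=1$, so the permutation-averaging step you flag as delicate becomes unnecessary.
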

\begin{proof}
    By the subadditivity of the von Neumann entropy, calling $\rho_{n,i}\coloneqq\Tr_{[n]\setminus \{i\}}\rho_n$, we have
    \bb
        \frac{1}{n}S(\rho_n)\leq \frac{1}{n}\sum_{i=1}^n S(\rho_{n,i})\leq S(\rho)+\frac{1}{n}\sum_{i=1}^n|S(\rho_{n,i})-S(\rho)|.
    \ee
Then, by the continuity of the von Neumann entropy with respect to the trace distance \cite{Fannes1973, Audenaert2007}, 
\bb\label{eq:above_limsup}
    \frac{1}{n}S(\rho_n)\leqt{(i)} S(\rho)+ \frac{1}{n}\sum_{i=1}^n \big(\epsilon_{n,i}\log(d-1)+h_2(\epsilon_{n,i})\big)\leqt{(ii)} S(\rho)+\epsilon_n\log (d-1)+h_2(\epsilon_n).
\ee
where in (i) we have introduced
\bb
    \epsilon_{n,i}\coloneqq \frac 12 \|\rho_{n,i}-\rho\|_1, \qquad \epsilon_n\coloneqq\frac{1}{n}\sum_{i=1}^n\epsilon_{n,i}=\frac{1}{2}\,\EE{\substack{I\subseteq [n]\\ |I|=1}}\|(\rho_n)_I-\rho^{\otimes I}\|_1,
\ee
and in (ii) we have used the concavity of the binary entropy. By the very definition of weakly almost i.i.d.\ source, we have $\displaystyle{\lim_{n\to\infty}\epsilon_n=0}$, so taking the limsup in \eqref{eq:above_limsup} yields \eqref{eq:claim}.
\end{proof}
Now we prove that any weakly almost i.i.d.\ source is not Wasserstein if and only if 
\bb
\displaystyle{\limsup_{n\to\infty}\frac{1}{n}S(\rho_n)< S(\rho)}.
\ee

\begin{prop}\label{prop:if}
    Let $(\rho_n)_n$ be a weakly almost i.i.d.\ source along $\rho$ such that
    \bb\label{eq:hp_ent}
        \lim_{n\to\infty}\frac{1}{n}S(\rho_n)= S(\rho).
    \ee
    Then $(\rho_n)_n$ is a $W_1$ almost i.i.d.\ source along $\rho$.
\end{prop}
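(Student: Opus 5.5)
Since $\rho^{\otimes n}$ is a product state, the natural tool is Theorem~\ref{thm:Marton}. It gives
\begin{equation}
    \frac1n\|\rho_n-\rho^{\otimes n}\|_{W_1}\le\sqrt{\frac{1}{2n}D(\rho_n\|\rho^{\otimes n})},
\end{equation}
so it suffices to show that $\frac1n D(\rho_n\|\rho^{\otimes n})\to0$. The obstacle is that $\rho$ may not be full rank, in which case the relative entropy can be infinite. I would deal with this by first replacing $\rho$ with the full-rank state $\rho_\delta\coloneqq(1-\delta)\rho+\delta\,\id/d$ for a small $\delta>0$. The triangle inequality then gives
\begin{equation}
    \frac1n\|\rho_n-\rho^{\otimes n}\|_{W_1}\le\frac1n\|\rho_n-\rho_\delta^{\otimes n}\|_{W_1}+\frac1n\|\rho_\delta^{\otimes n}-\rho^{\otimes n}\|_{W_1}.
\end{equation}
By \cite[Corollary 1]{De_Palma_2021} (additivity on product states, as used in the proof of Proposition~\ref{prop:pure_Wasserstein}), the second term equals $\frac12\|\rho_\delta-\rho\|_1\le\delta$.

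For the first term, I expand the relative entropy:
\begin{equation}
    \frac1n D(\rho_n\|\rho_\delta^{\otimes n})=-\frac{S(\rho_n)}{n}-\frac1n\sum_{i=1}^n\Tr[\rho_{n,i}\log\rho_\delta].
\end{equation}
The operator $\log\rho_\delta$ has norm at most $\log(d/\delta)$. Hence
\begin{equation}
    \Bigl|\frac1n\sum_{i=1}^n\Tr[\rho_{n,i}\log\rho_\delta]-\Tr[\rho\log\rho_\delta]\Bigr|\le 2\epsilon_n\log(d/\delta),
\end{equation}
where $\epsilon_n$ is the averaged one-body trace distance from the proof of Proposition~\ref{prop:usc}. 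The weak almost i.i.d.\ property with $k=1$ gives $\epsilon_n\to0$. Using hypothesis~\eqref{eq:hp_ent}, the limit of the expression is
\begin{equation}
    \lim_{n\to\infty}\frac1n D(\rho_n\|\rho_\delta^{\otimes n})=-S(\rho)-\Tr[\rho\log\rho_\delta]=D(\rho\|\rho_\delta).
\end{equation}
Combining the two terms,
\begin{equation}
    \limsup_{n\to\infty}\frac1n\|\rho_n-\rho^{\otimes n}\|_{W_1}\le\sqrt{\tfrac12 D(\rho\|\rho_\delta)}+\delta.
\end{equation}

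The last step is to let $\delta\to0$. This requires $D(\rho\|\rho_\delta)\to0$, which holds by operator monotonicity of the logarithm: $\rho_\delta\ge(1-\delta)\rho$ implies $D(\rho\|\rho_\delta)\le-\log(1-\delta)$. Arbitrariness of $\delta$ then concludes the proof.

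The only delicate point is the regularisation needed for non-full-rank $\rho$, handled above by the mixing parameter $\delta$. Note that the argument uses only the one-body marginals and the entropy hypothesis. Only the weak-source condition with $k=1$ is needed, together with~\eqref{eq:hp_ent}.
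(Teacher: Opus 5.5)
Your proof is correct and follows essentially the same route as the paper. Both arguments regularise with $\rho_\delta=(1-\delta)\rho+\delta\,\id/d$, split via the triangle inequality and additivity of $W_1$ on product states, apply Theorem~\ref{thm:Marton}, and control the cross term $\frac1n\sum_i\Tr[\rho_{n,i}\log\rho_\delta]$ through the one-body marginals using $\|\log\rho_\delta\|_\infty\le\log(d/\delta)$.

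The only difference is minor. You compare this cross term with $\Tr[\rho\log\rho_\delta]$, so the limit is exactly $D(\rho\|\rho_\delta)\le-\log(1-\delta)$. The paper instead compares it with $\Tr[\rho_\delta\log\rho_\delta]=-S(\rho_\delta)$ and then uses Fannes--Audenaert continuity, which costs an extra $\delta\log(d/\delta)$ term that likewise vanishes as $\delta\to0^+$.
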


\begin{proof}
    For $\delta\in (0,1)$, let us call $\rho_\delta\coloneqq (1-\delta)\rho+\delta\frac{\id} d$.
By the triangle inequality, we have
\begin{equation}
    \frac{1}{n} \|\rho_n - \rho^{\otimes n}\|_{W_1} \le \frac{1}{n} \|\rho_n - \rho_\delta^{\otimes n}\|_{W_1} + \frac{1}{n} \|\rho_\delta^{\otimes n}-\rho^{\otimes n}\|_{W_1}=\frac{1}{n} \|\rho_n - \rho_\delta^{\otimes n}\|_{W_1}+\frac 12\|\rho_\delta-\rho\|_1,
    \end{equation}
where the last identity follows from~\cite[Corollary 1]{De_Palma_2021}. Note that $\frac 12\|\rho_\delta-\rho\|_1\leq \delta$ and $\rho_\delta\geq \frac \delta d \id$. Now, calling $\rho_{n,i}\coloneqq\Tr_{[n]\setminus \{i\}}\rho_n$, by~\cite[Theorem~2]{De_Palma_2021}, we can upper bound
\bb\label{eq:bound_above}
    \frac{1}{n} \|\rho_n - \rho_\delta^{\otimes n}\|_{W_1} &\leq  \sqrt{\frac{1}{2n} D(\rho_n ||\rho_\delta^{\otimes n})} \\
    &= \sqrt{\frac{1}{2n} \left( -S(\rho_n) - \sum_{i=1}^n \Tr[\rho_{n,i} \log \rho_\delta] \right)} \\
    &\leqt{(i)} \sqrt{\frac{1}{2}\left(S(\rho_\delta)-\frac 1n S(\rho_n)\right)+\frac{1}{2}\log\left(\frac d \delta\right) \left(\|\rho-\rho_\delta\|_1+\frac 1n\sum_{i=1}^n\|\rho_{n,i}-\rho\|_1\right)} \\
    &\leq \sqrt{\frac{1}{2}\left(S(\rho_\delta)-\frac 1n S(\rho_n)\right)+\delta\log\left(\frac d \delta\right)+\frac{1}{2}\log\left(\frac d \delta\right)\EE{\substack{|I|\subseteq [n]\\ |I|=1}}\|(\rho_n)_I-\rho^{\otimes I}\|_1},
\ee
where~(i) simply follows from the inequality 
\bb
-\Tr[\rho_{n,i} \log \rho_\delta]&\leq -\Tr[\rho_\delta \log \rho_\delta]+\|\rho_{n,i}-\rho_\delta\|_1\|\log\rho_\delta\|_\infty\\
&\leq S(\rho_\delta)+\big(\|\rho_{n,i}-\rho\|_1+\|\rho-\rho_\delta\|_1\big)\,\big|\log \tfrac \delta d\big|.
\ee
Then, by the very definition of weakly almost i.i.d.\ source and by the assumption \eqref{eq:hp_ent}, we have
\bb
    \limsup_{n\to\infty}\frac{1}{n} \|\rho_n - \rho_\delta^{\otimes n}\|_{W_1} &\leq \sqrt{\frac{1}{2}\big(S(\rho_\delta)-S(\rho)\big)+\delta\log\left(\frac d \delta\right)}\\
    &\leq \sqrt{\frac{\delta}{2}\log (d-1)+\frac 12h_2(\delta)+\delta\log\left(\frac d \delta\right)}
\ee
where the last inequality follows from the continuity of the von Neumann entropy with respect to the trace distance \cite{Fannes1973, Audenaert2007}. By arbitrariness of $\delta$, we can take the limit $\delta\to 0^+$, which concludes the proof.
\end{proof}

\subsection{Collapse of the hierarchy for sources along pure states}
Leveraging the characterisation given in the previous section, any weakly almost i.i.d.\ source $(\rho_n)_n$ along a pure state $\gamma$ must also be a Wasserstein almost i.i.d.\ source. Indeed, by Proposition \ref{prop:usc},
\bb
	\limsup_{n\to\infty}\frac 1n S(\rho_n)\leq S(\gamma)=0,
\ee
which immediately implies, by positivity of the von Neumann entropy, that 
\bb
	\lim_{n\to\infty}\frac 1n S(\rho_n)= S(\gamma),
\ee
whence $(\rho_n)_n$ is a Wasserstein almost i.i.d.\ source along $\gamma$ by Proposition \ref{prop:if}. We are now going to prove another remarkable property of weakly almost i.i.d.\ sources $(\rho_n)_n$ along pure states: up to an asymptotically vanishing trace distance error, they correspond to MSR almost i.i.d.\ source along the same pure state (see Corollary \ref{cor:pure}). We need a preliminary statement.

\begin{prop}\label{prop:projector}
    Let $\gamma\in\mathcal{D}(\mathcal{H})$ be a pure state, let $\rho_n\in\mathcal{D}(\mathcal{H}^{\otimes n})$ and let
    \begin{equation}
    \epsilon = \frac{1}{n}\left\|\rho_n - \gamma^{\otimes n}\right\|_{W_1}.
    \end{equation}
    Let
    \begin{equation}
        H = \sum_{i=1}^n\left(\id-\gamma_i\right),
    \end{equation}
    where each $\gamma_i$ acts on the $i$-th subsystem as $\gamma$, and, for any $0\leq r\leq n$, we denote by $P_r$ be the orthogonal projector onto the subspace where $H\le r$.
    Then,
    \begin{equation}
        \frac{1}{2}\left\|\rho_n - P_r\,\rho_n\,P_r\right\|_1 \le \sqrt{\frac{n\,\epsilon}{r+1}}.
    \end{equation}
\end{prop}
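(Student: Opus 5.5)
The plan is to combine three ingredients: the dual formulation of the $W_1$ distance applied to the observable $H$, Markov's inequality on the spectrum of $H$, and the gentle measurement lemma.

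\emph{Step 1: bound $\Tr[\rho_n H]$ by $n\epsilon$.} Each term $\id-\gamma_i$ acts only on site $i$ and has operator norm at most $1$. Hence $\partial_i H\le 2\|\id-\gamma_i\|\le 2$, obtained by subtracting the $i$-th term and comparing. A sharper choice is to subtract $\tfrac12\id$, which gives $\|(\id-\gamma_i)-\tfrac12\id\|=\tfrac12$ and therefore $\partial_i H\le 1$. So $\|H\|_L\le 1$. Since $\gamma$ is pure, $\Tr[\gamma^{\otimes n}H]=0$, and the dual formulation of the $W_1$ distance (the Proposition from~\cite{De_Palma_2021}) gives
\[
\Tr[\rho_n H]=\Tr[(\rho_n-\gamma^{\otimes n})H]\le \|\rho_n-\gamma^{\otimes n}\|_{W_1}=n\epsilon .
\]
Alternatively, one can work directly from the primal decomposition $\rho_n-\gamma^{\otimes n}=\sum_i c_i(\tau^{(i)}-\eta^{(i)})$. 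Then $\Tr[(\tau^{(i)}-\eta^{(i)})H]$ only sees the $i$-th term $\id-\gamma_i$, because the terms for $j\neq i$ are killed by $\Tr_{A_i}\tau^{(i)}=\Tr_{A_i}\eta^{(i)}$. That $i$-th contribution is at most $1$, which again yields $\Tr[\rho_n H]\le \sum_i c_i=n\epsilon$. I will present this primal argument, since it is self-contained and avoids checking the factor-$2$ convention in $\partial_i$.

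\emph{Step 2: Markov.} The spectrum of $H$ consists of the integers $0,\dots,n$, because the $\gamma_i$ are commuting projectors. So $\id-P_r$ projects onto the eigenspaces with $H\ge r+1$, and $H\ge (r+1)(\id-P_r)$. Therefore
\[
\Tr[(\id-P_r)\rho_n]\le \frac{\Tr[\rho_n H]}{r+1}\le\frac{n\epsilon}{r+1}.
\]

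\emph{Step 3: gentle measurement.} By~\cite[Lemma~9.4.1]{MARK}, if $\Tr[P\rho]\ge 1-\eta$ then $\tfrac12\|\rho-P\rho P\|_1\le\sqrt{\eta}$ (for the unnormalised post-measurement state, possibly with the $1-\eta$ correction in the exact constant). Taking $\eta=n\epsilon/(r+1)$ gives the claim.

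The main obstacle is to make sure the gentle measurement lemma is quoted in a form whose constant gives exactly $\sqrt{\eta}$ for the unnormalised operator $P_r\rho_nP_r$. The direct route is to purify $\rho_n$ and note that $\|\,\ketbra{\psi}-P\ketbra{\psi}P\,\|_1\le 2\sqrt{\eta}$ for pure states; this gives the bound $\sqrt{\eta}$ after halving and using monotonicity under partial trace. The subtler conceptual step is Step 1, namely that the $W_1$ norm controls the expectation of the local observable $H$ with constant exactly $1$.
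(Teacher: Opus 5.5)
Your proposal is correct and follows the paper's proof: $\|H\|_L\le 1$ gives $\Tr[\rho_n H]\le n\epsilon$ (your primal computation is just the weak-duality direction written out), Markov's inequality on the integer spectrum of $H$ gives $q\coloneqq\Tr[\rho_n(\id-P_r)]\le n\epsilon/(r+1)$, and a gentle-measurement step finishes. The only cosmetic difference is that last step. The paper writes $\rho_n-P_r\rho_nP_r=P_r\rho_n(\id-P_r)+(\id-P_r)\rho_n$ and bounds $\|(\id-P_r)\rho_n\|_1\le\sqrt{q}$ via a pure-state decomposition and concavity of the square root, whereas you purify and use the exact pure-state value $\sqrt{q(4-3q)}\le 2\sqrt{q}$ together with monotonicity under partial trace; both give exactly $\sqrt{n\epsilon/(r+1)}$.
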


\begin{proof}
Let $Q_r = \id - P_r$ be the orthogonal projector onto the subspace where $H\ge r+1$.
We have
\begin{equation}
    \left\|H\right\|_L \le 2\left\|\frac{\id}{2} - \gamma\right\| = 1,
\end{equation}
therefore
\begin{equation}\label{eq:stima}
\left(r+1\right)\mathrm{Tr}\left[\rho_n\,Q_r\right] \le \mathrm{Tr}\left[\rho_n\,H\right] = \mathrm{Tr}\left[\left(\rho_n - \gamma^{\otimes n}\right)H\right] \le \left\|\rho_n - \gamma^{\otimes n}\right\|_{W_1} = n\,\epsilon.
\end{equation}
Furthermore, writing $P_r\,\rho_n\,P_r=P_r\,\rho_n\,(\id-Q_r)$, we have
\bb
    \left\|\rho_n - P_r\,\rho_n\,P_r\right\|_1 = \left\|P_r\,\rho_n\,Q_r + Q_r\,\rho_n\right\|_1 &\le \left\|P_r\,\rho_n\,Q_r\right\|_1 + \left\|Q_r\,\rho_n\right\|_1 \\
    &\le \left\|\rho_n\,Q_r\right\|_1 + \left\|Q_r\,\rho_n\right\|_1 = 2\left\|Q_r\,\rho_n\right\|_1.
\ee
Let $\displaystyle{\rho_n = \sum_i p_i\,|\psi_i\rangle\langle\psi_i|}$
be a decomposition of $\rho_n$ as a convex combination of pure states.
Then, by concavity of the square root and by \eqref{eq:stima},
\bb
    \frac{1}{2}\left\|\rho_n - P_r\,\rho_n\,P_r\right\|_1&\leq \left\|Q_r\,\rho_n\right\|_1 \le \sum_i p_i\left\|Q_r|\psi_i\rangle\langle\psi_i|\right\|_1 = \sum_i p_i\sqrt{\langle\psi_i|Q_r|\psi_i\rangle} \\
    &\le \sqrt{\sum_i p_i\,\langle\psi_i|Q_r|\psi_i\rangle} = \sqrt{\mathrm{Tr}\left[\rho_n\,Q_r\right]}\le \sqrt{\frac{n\,\epsilon}{r+1}}\,.
\ee
The claim follows.
\end{proof}

\begin{cor}\label{cor:pure}
    Let $(\rho_n)_n$ be a Wasserstein (or, equivalently, weakly) almost i.i.d.\ source along a pure state $\gamma$. Then, there exists a MSR almost i.i.d.\ source $(\rho_n')_n$ along $\gamma$ such that
    \bb
        \lim_{n\to\infty}\|\rho_n-\rho_n'\|_1=0.
    \ee
\end{cor}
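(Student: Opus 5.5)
The plan is to take $\rho_n'$ to be the normalised projection of $\rho_n$ onto the low-defect subspace of Proposition~\ref{prop:projector}, with a sublinear threshold $r_n$. By the discussion preceding Proposition~\ref{prop:projector}, a weakly almost i.i.d.\ source along a pure $\gamma$ is automatically Wasserstein almost i.i.d., so we may assume
\[
\epsilon_n\coloneqq\tfrac1n\|\rho_n-\gamma^{\otimes n}\|_{W_1}\to0 .
\]
We then pick a threshold sequence $r_n$ such that $r_n=o(n)$ and $n\epsilon_n/(r_n+1)\to0$. A concrete choice is $r_n\coloneqq\lceil n\sqrt{\epsilon_n}\,\rceil$ (or $\lceil n\max\{\sqrt{\epsilon_n},1/\sqrt n\}\rceil$ to guard against $\epsilon_n=0$). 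With it, $r_n/n\to0$ and $n\epsilon_n/(r_n+1)\le\sqrt{\epsilon_n}\to0$. We define
\[
\rho_n'\coloneqq \frac{P_{r_n}\rho_nP_{r_n}}{\Tr[P_{r_n}\rho_n]},
\]
with an arbitrary valid choice such as $\gamma^{\otimes n}$ in the degenerate case where the trace vanishes. That case cannot occur for large $n$ anyway.

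For the trace-distance estimate, Proposition~\ref{prop:projector} gives $\|\rho_n-P_{r_n}\rho_nP_{r_n}\|_1\le2\sqrt{n\epsilon_n/(r_n+1)}\to0$. Equation~\eqref{eq:stima} also gives $1-\Tr[P_{r_n}\rho_n]\le n\epsilon_n/(r_n+1)\to0$. Normalisation then costs at most an additional $1-\Tr[P_{r_n}\rho_n]$ in trace norm, so $\|\rho_n-\rho_n'\|_1\to0$ by the triangle inequality.

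It remains to show that $\rho_n'$ is $\binom{n}{r_n}$-almost i.i.d.\ in $\gamma$ in the sense of Definition~\ref{def:DeFinetti_almost_iid}. Since $\gamma=\ketbra{g}$ is pure, its purification is trivial: take $\ket{\psi_\gamma}=\ket g\otimes\ket{0}_E$, and extend $\rho_n'$ by $\rho_n'\otimes\ketbra{0}^{\otimes n}_{E^n}$. The key observation is that the eigenspace of $H$ with eigenvalue $m$ is spanned by products containing $\ket g$ on $n-m$ sites and vectors orthogonal to $\ket g$ on the remaining $m$. Hence the range of $P_{r}$ is contained in $\mathrm{span}\,\pazocal V^n_r(\mathcal H,\ket g)$, since any $m\le r$ of those orthogonal factors together with $r-m$ copies of $\ket g$ form a vector $\omega^{(r)}$. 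This gives condition (ii).

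The main obstacle is condition (i), permutation invariance, because $\rho_n$ itself need not be symmetric. Two fixes are available:
\begin{itemize}
\item If one adopts the relaxed reading discussed in Section~\ref{sec:renner}, which drops the symmetry requirement, condition (i) is not needed.
\item Otherwise, one can symmetrise $\rho_n'$ over $S_n$ using a classical register in $E^n$ that records the permutation. The difficulty is that the register must itself be distributed symmetrically across the $E_i$ factors, and it must not break the support condition.
\end{itemize}
In the write-up I would state the corollary with respect to the support condition (ii), and note that the symmetrisation is the delicate point.
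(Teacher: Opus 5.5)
Your argument is the paper's own. You use the same projector $P_{r_n}$ from Proposition~\ref{prop:projector}, the same threshold $r_n=\lceil n\sqrt{\epsilon_n}\,\rceil$, the same normalised $\rho_n'$ and the same triangle-inequality estimate. Your verification of (ii) through the eigenspaces of $H$ and the trivial purification $\ket{g}\otimes\ket{0}_E$ spells out what the paper compresses into ``by definition of $P_r$''. Note that the paper's proof never checks condition (i) either. So what both arguments actually establish is your first option, the relaxed notion without the symmetry requirement.

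Your second option cannot be made to work, and the obstruction is more than technical. Since $\Tr_{E^n}$ commutes with the joint permutations $U_\pi^{A^n}\otimes U_\pi^{E^n}$, every extension satisfying (i) has a permutation-invariant $A^n$-marginal. A register recording $\pi$ therefore yields an extension of $\frac{1}{n!}\sum_\pi U_\pi\rho_n'U_\pi^\dagger$, not of $\rho_n'$.

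Passing to the symmetrised state instead destroys trace-norm closeness in general. Take $\gamma=\ketbra{0}$ and $\rho_n=\ketbra{e_1}$, where $\ket{e_j}$ is the computational basis string with a single $1$ in position $j$. Then $\rho_n$ and $\gamma^{\otimes n}$ are neighbouring, so $\frac1n\|\rho_n-\gamma^{\otimes n}\|_{W_1}\le\frac1n$. On the other hand, every permutation-invariant $\sigma_n$ satisfies $\bra{e_1}\sigma_n\ket{e_1}=\frac1n\sum_{j}\bra{e_j}\sigma_n\ket{e_j}\le\frac1n$, and hence $\frac12\|\rho_n-\sigma_n\|_1\ge 1-\frac1n$. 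So with (i) enforced, the corollary is false for non-symmetric sources, whatever the choice of $r_n$.

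The statement is correct in two forms. One is your relaxed form. The other adds the hypothesis that each $\rho_n$ is permutation invariant. In that case (i) comes for free: $H$ is symmetric, so $[P_{r_n},U_\pi]=0$, and $\rho_n'\otimes(\ketbra{0})^{\otimes n}$ is jointly invariant.
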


\begin{proof}
    Using the notation of Proposition \ref{prop:projector}, we have
    \bb
        1-\Tr[P_r\rho_n]\leq \|\rho-P_r\rho_nP_r\|_1\leq 2 \sqrt{\frac{n\,\epsilon_n}{r+1}}, \qquad \epsilon_n\coloneqq \frac{1}{n}\left\|\rho_n - \gamma^{\otimes n}\right\|_{W_1}.
    \ee
    Let $r=\lceil n\sqrt \epsilon_n\rceil =o(n)$.
    By definition of $P_r$, the sequence of states
    \bb
        \rho'_n\coloneqq \frac{P_r\rho_nP_r}{\Tr[P_r\rho_n]} \qquad n\geq 1,
    \ee
    is a MSR almost i.i.d.\ source along $\gamma$ with
    \bb
        \|\rho_n-\rho_n'\|_1&\leq \|\rho_n-P_r\rho_nP_r\|_1+\left\|P_r\rho_nP_r-\rho_n'\right\|_1\leq 2\sqrt{\frac{n\,\epsilon_n}{r+1}}+1-\Tr[P_r\rho_n]\\
        &\leq 4\sqrt{\frac{n\,\epsilon_n}{r+1}}\leq 4\epsilon_n^{1/4}=o(1)
    \ee
    This concludes the proof.
\end{proof}

\section{Conclusion}

In this paper, we have introduced two novel notions that model the intuitive concept of almost i.i.d.\ source in quantum information theory. These are based on the Wasserstein distance between quantum states and on the idea of looking at average $k$-body marginals, respectively. We have demonstrated a strict hierarchical relation between these classes of sources and a model of almost i.i.d.\ sources previously discussed by Mazzola et al.~\cite{Mazzola_2026}, highlighting the differences among these sets by means of explicit examples. We are confident that the tools we develop here will pave the way to a more complete understanding of approximate i.i.d.-ness in classical and quantum information theory, bringing up one step closer to a faithful, physically meaningful model of reality.

\subsection*{\textsc{Acknowledgements}} FG and LL acknowledge financial support from the European Union (ERC StG ETQO, Grant Agreement no.\ 101165230).
GDP has been supported by the UNA EUROPA SeedFunding project QUANTUMUnaE (CUP J37G25000380006).
GDP is a member of the ``Gruppo Nazionale per la Fisica Matematica (GNFM)'' of the ``Istituto Nazionale di Alta Matematica ``Francesco Severi'' (INdAM)''.

\bibliography{biblio}

\end{document}

%% file: macros.tex
\usepackage{newpxtext,newpxmath}

\let\coloneqq\relax

\usepackage[utf8]{inputenc}
\usepackage{amsthm}
\usepackage{amssymb}
\usepackage{amsmath}
\usepackage{bbold}
\usepackage{bbm}
\usepackage[pdftex, backref=page]{hyperref}
\usepackage{braket}
\usepackage{dsfont}
\usepackage{mathdots}
\usepackage{mathtools}
\usepackage{enumerate}
\usepackage[shortlabels]{enumitem}
\usepackage{csquotes}
\usepackage{stmaryrd}
\usepackage[cal=boondox]{mathalfa}
\usepackage{graphicx}
\usepackage{stackengine}
\usepackage{scalerel}
\usepackage{tensor}       
\usepackage{array}
\usepackage{makecell}
\newcolumntype{x}[1]{>{\centering\arraybackslash}p{#1}}
\usepackage{tikz}
\usepackage{pgfplots}
\usetikzlibrary{shapes.geometric, shapes.misc, positioning, arrows, arrows.meta, decorations.pathreplacing, decorations.pathmorphing, patterns, angles, quotes, calc}
\usepackage{booktabs}
\usepackage{xfrac}
\usepackage{siunitx}
\usepackage{centernot}
\usepackage{comment}
\usepackage{chngcntr}
\usepackage{caption}
\usepackage{subcaption}

\newtheorem{thm}{Theorem}
\newtheorem*{thm*}{Theorem}
\newtheorem{prop}[thm]{Proposition}
\newtheorem*{prop*}{Proposition}
\newtheorem{lemma}[thm]{Lemma}
\newtheorem*{lemma*}{Lemma}
\newtheorem{cor}[thm]{Corollary}
\newtheorem*{cor*}{Corollary}

\newtheorem*{cj*}{Conjecture}
\newtheorem{Def}[thm]{Definition}
\newtheorem*{Def*}{Definition}

\newtheorem*{question*}{Question}

\newtheorem*{problem*}{Problem}

\makeatletter
\def\thmhead@plain#1#2#3{%
  \thmname{#1}\thmnumber{\@ifnotempty{#1}{ }\@upn{#2}}%
  \thmnote{ {\the\thm@notefont#3}}}
\let\thmhead\thmhead@plain
\makeatother

\theoremstyle{definition}
\newtheorem{rem}[thm]{Remark}

\newtheorem{ex}[thm]{Example}

\newcommand{\bb}{\begin{equation}\begin{aligned}\hspace{0pt}}
\newcommand{\bbb}{\begin{equation*}\begin{aligned}}
\newcommand{\ee}{\end{aligned}\end{equation}}
\newcommand{\eee}{\end{aligned}\end{equation*}}
\newcommand*{\coloneqq}{\mathrel{\vcenter{\baselineskip0.5ex \lineskiplimit0pt \hbox{\scriptsize.}\hbox{\scriptsize.}}} =}

\newcommand\ceil[1]{\left\lceil#1\right\rceil}
\newcommand{\eqt}[1]{\stackrel{\mathclap{\scriptsize \mbox{#1}}}{=}}
\newcommand{\leqt}[1]{\stackrel{\mathclap{\scriptsize \mbox{#1}}}{\leq}}

\newcommand{\ketbra}[1]{\ket{#1}\!\!\bra{#1}}

\renewcommand{\epsilon}{\varepsilon}

\newcommand{\id}{\mathds{1}}

\newcommand{\N}{\mathds{N}}

\DeclareMathOperator{\Tr}{Tr}

\DeclareMathAlphabet{\pazocal}{OMS}{zplm}{m}{n}

\DeclareMathOperator{\supp}{supp}

\newcommand{\EE}{\pazocal{E}}

\newcommand{\XX}{\pazocal{X}}

\newcommand{\lsmatrix}{\left(\begin{smallmatrix}}
\newcommand{\rsmatrix}{\end{smallmatrix}\right)}

\stackMath
\newcommand\xxrightarrow[2][]{\mathrel{%
  \setbox2=\hbox{\stackon{\scriptstyle#1}{\scriptstyle#2}}%
  \stackunder[5pt]{%
    \xrightarrow{\makebox[\dimexpr\wd2\relax]{$\scriptstyle#2$}}%
  }{%
   \scriptstyle#1\,%
  }%
}}

\newcommand{\tendsn}[1]{\xxrightarrow[\! n\rightarrow \infty\!]{#1}}

\stackMath

\makeatletter
\newcommand*\rel@kern[1]{\kern#1\dimexpr\macc@kerna}
\newcommand*\widebar[1]{%
  \begingroup
  \def\mathaccent##1##2{%
    \rel@kern{0.8}%
    \overline{\rel@kern{-0.8}\macc@nucleus\rel@kern{0.2}}%
    \rel@kern{-0.2}%
  }%
  \macc@depth\@ne
  \let\math@bgroup\@empty \let\math@egroup\macc@set@skewchar
  \mathsurround\z@ \frozen@everymath{\mathgroup\macc@group\relax}%
  \macc@set@skewchar\relax
  \let\mathaccentV\macc@nested@a
  \macc@nested@a\relax111{#1}%
  \endgroup
}

\counterwithin*{equation}{part}
\counterwithin*{thm}{part}
\counterwithin*{figure}{part}

\tikzset{meter/.append style={draw, inner sep=10, rectangle, font=\vphantom{A}, minimum width=30, line width=.8, path picture={\draw[black] ([shift={(.1,.3)}]path picture bounding box.south west) to[bend left=50] ([shift={(-.1,.3)}]path picture bounding box.south east);\draw[black,-latex] ([shift={(0,.1)}]path picture bounding box.south) -- ([shift={(.3,-.1)}]path picture bounding box.north);}}}
\tikzset{roundnode/.append style={circle, draw=black, fill=gray!20, thick, minimum size=10mm}}
\tikzset{squarenode/.style={rectangle, draw=black, fill=none, thick, minimum size=10mm}}

\definecolor{Blues5seq1}{RGB}{239,243,255}
\definecolor{Blues5seq2}{RGB}{189,215,231}
\definecolor{Blues5seq3}{RGB}{107,174,214}
\definecolor{Blues5seq4}{RGB}{49,130,189}
\definecolor{Blues5seq5}{RGB}{8,81,156}

\definecolor{Greens5seq1}{RGB}{237,248,233}
\definecolor{Greens5seq2}{RGB}{186,228,179}
\definecolor{Greens5seq3}{RGB}{116,196,118}
\definecolor{Greens5seq4}{RGB}{49,163,84}
\definecolor{Greens5seq5}{RGB}{0,109,44}

\definecolor{Reds5seq1}{RGB}{254,229,217}
\definecolor{Reds5seq2}{RGB}{252,174,145}
\definecolor{Reds5seq3}{RGB}{251,106,74}
\definecolor{Reds5seq4}{RGB}{222,45,38}
\definecolor{Reds5seq5}{RGB}{165,15,21}

\allowdisplaybreaks

\usepackage[most,breakable]{tcolorbox}
	{\expandafter\ifstrequal\expandafter{#1}{orange}{\begin{tcolorbox}[colback=red!15,colframe=orange!15,breakable,enhanced]}{\begin{tcolorbox}[colback=Blues5seq1,colframe=Blues5seq5,breakable,enhanced]}}%
	{\end{tcolorbox}}

%% file: almost.pdf_tex
\begingroup%
  \makeatletter%
  \providecommand\color[2][]{%
    \errmessage{(Inkscape) Color is used for the text in Inkscape, but the package 'color.sty' is not loaded}%
    \renewcommand\color[2][]{}%
  }%
  \providecommand\transparent[1]{%
    \errmessage{(Inkscape) Transparency is used (non-zero) for the text in Inkscape, but the package 'transparent.sty' is not loaded}%
    \renewcommand\transparent[1]{}%
  }%
  \providecommand\rotatebox[2]{#2}%
  \newcommand*\fsize{\dimexpr\f@size pt\relax}%
  \newcommand*\lineheight[1]{\fontsize{\fsize}{#1\fsize}\selectfont}%
  \ifx\svgwidth\undefined%
    \setlength{\unitlength}{380.17526702bp}%
    \ifx\svgscale\undefined%
      \relax%
    \else%
      \setlength{\unitlength}{\unitlength * \real{\svgscale}}%
    \fi%
  \else%
    \setlength{\unitlength}{\svgwidth}%
  \fi%
  \global\let\svgwidth\undefined%
  \global\let\svgscale\undefined%
  \makeatother%
  \begin{picture}(1,0.35601552)%
    \lineheight{1}%
    \setlength\tabcolsep{0pt}%
    \put(0,0){\includegraphics[width=\unitlength,page=1]{almost.pdf}}%
    \put(0.0970736,0.33593541){\color[rgb]{0,0,0}\makebox(0,0)[lt]{\lineheight{1.25}\smash{\begin{tabular}[t]{l}weakly almost i.i.d.\end{tabular}}}}%
    \put(-0.00060754,0.01706579){\color[rgb]{0,0,0}\makebox(0,0)[lt]{\lineheight{1.25}\smash{\begin{tabular}[t]{l}Wasserstein almost i.i.d.\end{tabular}}}}%
    \put(0.7772657,0.00031845){\color[rgb]{0,0,0}\makebox(0,0)[lt]{\lineheight{1.25}\smash{\begin{tabular}[t]{l}MSR almost i.i.d.\end{tabular}}}}%
    \put(0,0){\includegraphics[width=\unitlength,page=2]{almost.pdf}}%
    \put(0.40798738,0.25676235){\color[rgb]{0,0,0}\makebox(0,0)[lt]{\lineheight{1.25}\smash{\begin{tabular}[t]{l}$\tilde p_n$\end{tabular}}}}%
    \put(0.49822647,0.17295137){\color[rgb]{0,0,0}\makebox(0,0)[lt]{\lineheight{1.25}\smash{\begin{tabular}[t]{l}$\xi_n$\end{tabular}}}}%
    \put(0.29213619,0.19565808){\color[rgb]{0,0,0}\makebox(0,0)[lt]{\lineheight{1.25}\smash{\begin{tabular}[t]{l}$\Psi_n$\end{tabular}}}}%
    \put(0,0){\includegraphics[width=\unitlength,page=3]{almost.pdf}}%
  \end{picture}%
\endgroup%

%% file: biblio.bib
@article{GQSL,
  author={Lami, L.},
  journal={IEEE Trans. Inf. Theory}, 
  title={A Solution of the Generalized Quantum {S}tein's Lemma}, 
  year={2025},
  volume={71},
  number={6},
  pages={4454--4484},
  doi={10.1109/TIT.2025.3543610}
}

@article{doubly-comp-classical,
  title={A doubly composite {C}hernoff--{S}tein lemma and its applications}, 
  author={Lami, L.},
  year={2025},
  journal={Preprint arXiv:2510.06342},
  doi={10.48550/arXiv.2510.06342}
}

@article{ornstein1973application,
  title={An application of ergodic theory to probability theory},
  author={Ornstein, D. S.},
  journal={Ann. Probab.},
  volume={1},
  number={1},
  pages={43--58},
  year={1973},
  publisher={JSTOR}
}

@book{gray2011entropy,
  title={Entropy and Information Theory},
  author={Gray, R.M.},
  isbn={9781441979704},
  series={Engineering},
  year={2011},
  publisher={Springer US}
}

@book{monge1781memoire,
  title={M{\'e}moire sur la th{\'e}orie des d{\'e}blais et des remblais},
  author={Monge, G.},
  year={1781},
  publisher={Imprimerie royale}
}

@article{kantorovich2006translocation,
  title={On the Translocation of Masses},
  author={Kantorovich, L. V.},
  journal={J. Math. Sci.},
  volume={133},
  number={4},
  year={2006},
  doi={10.1007/s10958-006-0049-2}
}

@book{villani2008optimal,
  title={Optimal Transport: Old and New},
  author={Villani, C.},
  isbn={9783540710509},
  lccn={2008932183},
  series={Grundlehren der mathematischen Wissenschaften},
  year={2008},
  publisher={Springer Berlin Heidelberg}
}

@article{Caves2002,
  title = {Unknown quantum states: The quantum de {F}inetti representation},
  volume = {43},
  ISSN = {1089-7658},
  url = {http://dx.doi.org/10.1063/1.1494475},
  DOI = {10.1063/1.1494475},
  number = {9},
  journal = {J. Math. Phys.},
  publisher = {AIP Publishing},
  author = {Caves, C. M. and Fuchs,  C. A. and Schack, R.},
  year = {2002},
  pages = {4537–4559}
}

@article{Konig2005,
  title = {A de {F}inetti representation for finite symmetric quantum states},
  volume = {46},
  ISSN = {1089-7658},
  url = {http://dx.doi.org/10.1063/1.2146188},
  DOI = {10.1063/1.2146188},
  number = {12},
  journal = {J. Math. Phys.},
  publisher = {AIP Publishing},
  author = {K\"{o}nig, R. and Renner, R.},
  year = {2005},
  month = Dec 
}

@article{Christandl_2007,
  title = {One-and-a-Half Quantum de {F}inetti Theorems},
  volume = {273},
  ISSN = {1432-0916},
  url = {http://dx.doi.org/10.1007/s00220-007-0189-3},
  DOI = {10.1007/s00220-007-0189-3},
  number = {2},
  journal = {Commun. Math. Phys.},
  publisher = {Springer Science and Business Media LLC},
  author = {Christandl, M. and K\"{o}nig, R. and Mitchison, G. and Renner, R.},
  year = {2007},
  month = Mar,
  pages = {473–498}
}

@article{Renner2007,
  title = {Symmetry of large physical systems implies independence of subsystems},
  volume = {3},
  ISSN = {1745-2481},
  url = {http://dx.doi.org/10.1038/nphys684},
  DOI = {10.1038/nphys684},
  number = {9},
  journal = {Nat. Phys.},
  publisher = {Springer Science and Business Media LLC},
  author = {Renner, R.},
  year = {2007},
  month = July,
  pages = {645–649}
}

@article{Diaconis,
 ISSN = {00911798, 2168894X},
 URL = {http://www.jstor.org/stable/2242823},
 author = {Diaconis, P. and Freedman, D.},
 journal = {Ann. Probab.},
 number = {4},
 pages = {745--764},
 publisher = {Institute of Mathematical Statistics},
 title = {Finite Exchangeable Sequences},
 urldate = {2026-05-10},
 volume = {8},
 year = {1980}
}

@article{deFinetti_1,
     author = {de Finetti, B.},
     title = {La pr\'evision : ses lois logiques, ses sources subjectives},
     journal = {Ann. Inst. Henri Poincar\'e},
     pages = {1--68},
     year = {1937},
     publisher = {Institut Henri Poincar\'e et Gauthier-Villars},
     volume = {7},
     number = {1},
     zbl = {0017.07602},
     language = {fr},
}

@article{deFinetti_2,
title = {Logical foundations and measurement of subjective probability},
journal = {Acta Psychol.},
volume = {34},
pages = {129-145},
year = {1970},
issn = {0001-6918},
doi = {https://doi.org/10.1016/0001-6918(70)90012-0},
url = {https://www.sciencedirect.com/science/article/pii/0001691870900120},
author = {de Finetti, B.},
}

@unpublished{Renner-talk-Cambridge,
title= {{Almost-IID information theory}},
author = {Renner, R.},
year = {2024},
month = {Aug},
note = {Workshop `Bridging Quantum Information and Mathematical Physics'},
url = {https://felixleditzky.info/bqm/}
}

@article{Mazzola_2026,
      title={Almost-iid information theory}, 
      author={Mazzola, G. and Sutter, D. and Renner, D.},
      year={2026},
      journal={Preprint arXiv:2603.15792},
      url={https://arxiv.org/abs/2603.15792}, 
}

@article{De_Palma_2023b,
   title={The {W}asserstein Distance of Order 1 for Quantum Spin Systems on Infinite Lattices},
   volume={24},
   ISSN={1424-0661},
   url={http://dx.doi.org/10.1007/s00023-023-01340-y},
   DOI={10.1007/s00023-023-01340-y},
   number={12},
   journal={Ann. Henri Poincar\'{e}},
   publisher={Springer Science and Business Media LLC},
   author={De Palma, G. and Trevisan, D.},
   year={2023},
   month={June}, pages={4237–4282} }

@article{bakshi2025dobrushinconditionquantummarkov,
      title={A {D}obrushin condition for quantum {M}arkov chains: Rapid mixing and conditional mutual information at high temperature}, 
      author={Bakshi, A. and Liu, A. and Moitra, A. and Tang, E.},
      year={2025},
      journal={Preprint arXiv:2510.08542},
      url={https://arxiv.org/abs/2510.08542}
}

@inbook{De_Palma_2024b,
   title={Quantum Optimal Transport: Quantum Channels and Qubits},
   ISBN={9783031504662},
   ISSN={2947-9460},
   url={http://dx.doi.org/10.1007/978-3-031-50466-2_4},
   DOI={10.1007/978-3-031-50466-2_4},
   booktitle={Optimal Transport on Quantum Structures},
   publisher={Springer Nature Switzerland},
   author={De Palma, G. and Trevisan, D.},
   year={2024},
   pages={203–239} }

@article{De_Palma_2022,
   title={Quantum Concentration Inequalities},
   volume={23},
   ISSN={1424-0661},
   url={http://dx.doi.org/10.1007/s00023-022-01181-1},
   DOI={10.1007/s00023-022-01181-1},
   number={9},
   journal={Ann. Henri Poincaré},
   publisher={Springer Science and Business Media LLC},
   author={De Palma, G. and Rouzé, C.},
   year={2022},
   pages={3391–3429} }

@article{De_Palma_2025,
   title={Quantum Concentration Inequalities and Equivalence of the Thermodynamical Ensembles: An Optimal Mass Transport Approach},
   volume={192},
   ISSN={1572-9613},
   url={http://dx.doi.org/10.1007/s10955-025-03464-3},
   DOI={10.1007/s10955-025-03464-3},
   number={6},
   journal={J. Stat. Phys.},
   publisher={Springer Science and Business Media LLC},
   author={De Palma, G. and Pastorello, D.},
   year={2025},
   month=June }

@article{Bardet_2024,
   title={Entropy Decay for {D}avies Semigroups of a One Dimensional Quantum Lattice},
   volume={405},
   ISSN={1432-0916},
   url={http://dx.doi.org/10.1007/s00220-023-04869-5},
   DOI={10.1007/s00220-023-04869-5},
   number={2},
   journal={Commun. Math. Phys.},
   publisher={Springer Science and Business Media LLC},
   author={Bardet, I. and Capel, \'{A}. and Gao, L. and Lucia, A. and Pérez-García, D. and Rouzé, C.},
   year={2024}
}

@article{Kiani_2022,
   title={Learning quantum data with the quantum earth mover’s distance},
   volume={7},
   ISSN={2058-9565},
   url={http://dx.doi.org/10.1088/2058-9565/ac79c9},
   DOI={10.1088/2058-9565/ac79c9},
   number={4},
   journal={Quantum Science and Technology},
   publisher={IOP Publishing},
   author={Kiani, B. T. and De Palma, G. and Marvian, M. and Liu, Z.-W. and Lloyd, S.},
   year={2022},
   pages={045002} }

@article{Rouz__2024b,
   title={Learning quantum many-body systems from a few copies},
   volume={8},
   ISSN={2521-327X},
   url={http://dx.doi.org/10.22331/q-2024-04-30-1319},
   DOI={10.22331/q-2024-04-30-1319},
   journal={Quantum},
   publisher={Verein zur Forderung des Open Access Publizierens in den Quantenwissenschaften},
   author={Rouzé, C. and Stilck França, D.},
   year={2024},
   month={Apr}, pages={1319} }

@article{Kochanowski_2025,
   title={Rapid Thermalization of Dissipative Many-Body Dynamics of Commuting Hamiltonians},
   volume={406},
   ISSN={1432-0916},
   url={http://dx.doi.org/10.1007/s00220-025-05353-y},
   DOI={10.1007/s00220-025-05353-y},
   number={8},
   journal={Commun. Math. Phys.},
   publisher={Springer Science and Business Media LLC},
   author={Kochanowski, J. and Alhambra, \'{A}. M. and Capel, \'{A}. and Rouzé, C.},
   year={2025},
   month=July }

@article{Rouz__2024,
   title={Efficient learning of ground and thermal states within phases of matter},
   volume={15},
   ISSN={2041-1723},
   url={http://dx.doi.org/10.1038/s41467-024-51439-x},
   DOI={10.1038/s41467-024-51439-x},
   number={1},
   journal={Nat. Commun.},
   publisher={Springer Science and Business Media LLC},
   author={Rouzé, C. and Stilck França, D. and Onorati, E. and Watson, J. D.},
   year={2024}}

@article{hirche2023quantumdifferentialprivacyinformation,
      title={Quantum Differential Privacy: An Information Theory Perspective}, 
      author={Hirche, C. and Rouzé, C. and Stilck França, D.},
      year={2023},
      journal={Preprint arXiv:2202.10717},
      url={https://arxiv.org/abs/2202.10717}
}

@article{De_Palma_2023,
   title={Limitations of Variational Quantum Algorithms: A Quantum Optimal Transport Approach},
   volume={4},
   ISSN={2691-3399},
   url={http://dx.doi.org/10.1103/PRXQuantum.4.010309},
   DOI={10.1103/prxquantum.4.010309},
   number={1},
   journal={PRX Quantum},
   publisher={American Physical Society (APS)},
   author={De Palma, G. and Marvian, M. and Rouzé, C. and Stilck França, D.},
   year={2023},
   month=Jan }

@article{De_Palma_2024,
   title={Classical shadows meet quantum optimal mass transport},
   volume={65},
   ISSN={1089-7658},
   url={http://dx.doi.org/10.1063/5.0178897},
   DOI={10.1063/5.0178897},
   number={9},
   journal={J. Math. Phys.},
   publisher={AIP Publishing},
   author={De Palma, G. and Klein, T. and Pastorello, D.},
   year={2024},
   month=Sept }

@article{Arnaud_2013,
   title={Exploring pure quantum states with maximally mixed reductions},
   volume={87},
   ISSN={1094-1622},
   url={http://dx.doi.org/10.1103/PhysRevA.87.012319},
   DOI={10.1103/physreva.87.012319},
   number={1},
   journal={Phys. Rev. A},
   publisher={American Physical Society (APS)},
   author={Arnaud, L. and Cerf, N. J.},
   year={2013},
   month=Jan }

@article{De_Palma_2021,
   title={The Quantum {W}asserstein Distance of Order 1},
   volume={67},
   ISSN={1557-9654},
   url={http://dx.doi.org/10.1109/TIT.2021.3076442},
   DOI={10.1109/tit.2021.3076442},
   number={10},
   journal={IEEE Trans. Inf. Theory},
   publisher={Institute of Electrical and Electronics Engineers (IEEE)},
   author={De Palma, G. and Marvian, M. and Trevisan, D. and Lloyd, S.},
   year={2021},
   month=oct, pages={6627–6643} }

@book{CSISZAR-KOERNER,
  author = {Csisz\'{a}r, I. and K\"{o}rner, J.},
  title = {Information theory: coding theorems for discrete memoryless systems},
  series = {Probability and Mathematical Statistics},
  publisher = {Academic Press, Inc., New York-London},
  year = {1981}
}

@book{MARK,
  title={Quantum Information Theory},
  author={Wilde, M. M.},
  isbn={9781316813300},
  year={2017},
  publisher={Cambridge University Press},
  edition={2nd}
}

@article{VV1999,
  title={Coding theorem and strong converse for quantum channels},
  author={Winter, A.},
  journal={IEEE Trans. Inf. Theory},
  volume={45},
  number={7},
  pages={2481--2485},
  year={1999},
  publisher={IEEE}
}

@article{H-Schumacher-Westmoreland,
  title = {Sending classical information via noisy quantum channels},
  author = {Schumacher, B. and Westmoreland, M. D.},
  journal = {Phys. Rev. A},
  volume = {56},
  issue = {1},
  pages = {131--138},
  numpages = {0},
  year = {1997},
  publisher = {American Physical Society},
  doi = {10.1103/PhysRevA.56.131}
}

@article{Holevo-S-W,
  author={Holevo, A. S.},
  journal={IEEE Trans. Inf. Theory},
  title={The capacity of the quantum channel with general signal states},
  year={1998},
  volume={44},
  number={1},
  pages={269--273},
  doi={10.1109/18.651037},
  issn={1557-9654}
}

@article{Lloyd-S-D,
  title = {Capacity of the noisy quantum channel},
  author = {Lloyd, S.},
  journal = {Phys. Rev. A},
  volume = {55},
  issue = {3},
  pages = {1613--1622},
  numpages = {0},
  year = {1997},
  publisher = {American Physical Society},
  doi = {10.1103/PhysRevA.55.1613}
}

@unpublished{L-Shor-D,
  title = {Lecture notes},
  author = {Shor, P.},
  note = {{MSRI Workshop on Quantum Computation}, Available at \url{https://www.msri.org/workshops/203/schedules/1181/documents/3863/assets/34150}},
  year = {2002}
}

@article{L-S-Devetak,
  author={Devetak, I.},
  journal={IEEE Trans. Inf. Theory},
  title={The private classical capacity and quantum capacity of a quantum channel},
  year={2005},
  volume={51},
  number={1},
  pages={44--55},
  doi={10.1109/TIT.2004.839515},
  issn={1557-9654}
}

@article{Bennett-distillation,
  title = {Concentrating partial entanglement by local operations},
  author = {Bennett, C. H. and Bernstein, H. J. and Popescu, S. and Schumacher, B.},
  journal = {Phys. Rev. A},
  volume = {53},
  issue = {4},
  pages = {2046--2052},
  numpages = {0},
  year = {1996},
  publisher = {American Physical Society},
  doi = {10.1103/PhysRevA.53.2046}
}

@article{Bennett-distillation-mixed,
  title = {Purification of Noisy Entanglement and Faithful Teleportation via Noisy Channels},
  author = {Bennett, C. H. and Brassard, G. and Popescu, S. and Schumacher, B. and Smolin, J. A. and Wootters, W. K.},
  journal = {Phys. Rev. Lett.},
  volume = {76},
  issue = {5},
  pages = {722--725},
  numpages = {0},
  year = {1996},
  doi = {10.1103/PhysRevLett.76.722}
}

@article{Brandao2010,
  author="Brand{\~a}o, F. G. S. L. and Plenio, M. B.",
  title="A Generalization of Quantum {S}tein's Lemma",
  journal="Commun. Math. Phys.",
  year="2010",
  volume="295",
  number="3",
  pages="791--828",
  issn="1432-0916",
  doi="10.1007/s00220-010-1005-z"
}

@article{Davies1969,
author = "Davies, E. B.",
journal = "Commun. Math. Phys.",
number = "4",
pages = "277--304",
publisher = "Springer",
title = "Quantum stochastic processes",
volume = "15",
year = "1969"
}

@article{BBM,
  title = {Quantum cryptography without {B}ell's theorem},
  author = {Bennett, C. H. and Brassard, G. and Mermin, N. D.},
  journal = {Phys. Rev. Lett.},
  volume = {68},
  issue = {5},
  pages = {557--559},
  numpages = {0},
  year = {1992},
  publisher = {American Physical Society},
  doi = {10.1103/PhysRevLett.68.557},
  url = {https://link.aps.org/doi/10.1103/PhysRevLett.68.557}
}

@phdthesis{RennerPhD,
  author = {Renner, R.},
  title = {Security of quantum key distribution},
  school = {ETH Zurich},
  year = {2005},
  doi = {10.48550/arXiv.quant-ph/0512258},
  note = {Preprint arXiv:quant-ph/0512258}
}

@article{Fannes1973,
author="Fannes, M.",
title="A continuity property of the entropy density for spin lattice systems",
journal="Commun. Math. Phys.",
year="1973",
volume="31",
number="4",
pages="291--294",
issn="1432-0916",
doi="10.1007/BF01646490"
}

@article{Audenaert2007,
  author={Audenaert, K. M. R.},
  title={A sharp continuity estimate for the von {N}eumann entropy},
  journal={J. Phys. A},
  volume={40},
  number={28},
  pages={8127},
  year={2007}
}
